\newcommand{\CH}{\texttt{CH}}
\newcommand{\CrudeNN}{\texttt{CrudeNN}}
\newcommand{\ChamferEstimate}{\texttt{Chamfer-Estimate}}
\newcommand{\boldeta}{\boldsymbol{\eta}}
\newcommand{\Prx}{\mathop{\Pr}}
\newcommand{\Ex}{\mathop{\E}}
\newcommand{\Varx}{\mathop{\Var}}
\title{A Near-Linear Time Algorithm for the Chamfer Distance}
\author{%
Ainesh Bakshi \\
MIT \\
\texttt{ainesh@mit.edu}
\And
Piotr Indyk \\
MIT \\
\texttt{indyk@mit.edu}
\And
Rajesh Jayaram \\
Google Research \\
\texttt{rkjayaram@google.com}
\And
Sandeep Silwal \\
MIT \\
\texttt{silwal@mit.edu}
\And
Erik Waingarten \\
University of Pennsylvania \\
\texttt{ewaingar@seas.upenn.edu}
}
\begin{document}

\maketitle

\begin{abstract}
For any two point sets $A,B \subset \R^d$ of size up to $n$, the Chamfer distance from $A$ to $B$ is defined as $\CH(A,B)=\sum_{a \in A} \min_{b \in B} d_X(a,b)$, where $d_X$ is the underlying distance measure (e.g., the Euclidean or Manhattan distance). The Chamfer distance is a popular measure of dissimilarity between point clouds, used in many machine learning, computer vision, and graphics applications, and admits a straightforward $\bigO{d n^2}$-time brute force algorithm. Further, the Chamfer distance is often used as a  proxy for the more computationally demanding Earth-Mover (Optimal Transport) Distance. However, the \emph{quadratic} dependence on $n$ in the running time  makes the naive approach intractable for large datasets.

We overcome this bottleneck and present the first $(1+\epsilon)$-approximate algorithm for estimating the Chamfer distance with a near-linear running time. Specifically, our algorithm runs in time $\bigO{nd \log (n)/\epsilon^2}$ and is implementable. Our experiments demonstrate that it is both accurate and fast on large high-dimensional datasets. We believe that our algorithm will open new avenues for analyzing large high-dimensional point clouds. We also give evidence that if the goal is to {\em report} a $(1+\eps)$-approximate mapping from $A$ to $B$ (as opposed to just its value), then any sub-quadratic  time algorithm is unlikely to exist.



\end{abstract}






\section{Introduction}

For any two point sets $A,B \subset \R^d$ of sizes up to $n$, the Chamfer distance\footnote{This is the definition adopted, e.g., in~\cite{athitsos2003estimating}. Some other papers, e.g.,~\cite{fan2017point}, replace each distance term $d_X(a,b)$ with its square, e.g., instead of $\|a-b\|_2$ they use $\|a-b\|_2^2$.  In this paper we focus on the first definition, as it emphasizes the connection to Earth Mover Distance and its relaxed weighted version in ~\cite{kusner2015word,atasu19a}.}  from $A$ to $B$ is defined as 

\begin{equation*}
\CH(A,B) = \sum_{a \in A} \min_{b \in B} d_X(a,b)
\end{equation*}

where $d_X$ is the underlying distance measure, such as the Euclidean or Manhattan distance. The Chamfer distance, and its weighted generalization called Relaxed Earth Mover Distance~\cite{kusner2015word,atasu19a}, are popular measures of dissimilarity between point clouds. They are widely used in 
machine learning (e.g.,~\cite{kusner2015word,wan2019transductive}), 
computer vision (e.g.,~\cite{athitsos2003estimating,sudderth2004visual,fan2017point,jiang2018gal})
and computer graphics~\cite{li2019lbs}.
Subroutines for computing Chamfer distances are available in popular libraries, such as Tensorflow~\cite{tensorflow}, Pytorch~\cite{pytorch3d} and PDAL~\cite{pdal}.
In many of those applications (e.g., \cite{kusner2015word}) Chamfer distance is used as a faster proxy for the more computationally demanding Earth-Mover (Optimal Transport) Distance. 

Despite the popularity of Chamfer distance, the naïve algorithm for computing it has quadratic $\bigO{n^2}$ running time, which makes it difficult to use for large datasets. Faster approximate algorithms can be obtained by performing $n$ exact or approximate nearest neighbor queries, one for each point in $A$. By utilizing the state of the art approximate nearest neighbor algorithms, this leads to $(1+\epsilon)$-approximate estimators with running times of $\bigO{n (1/\epsilon)^{\mathcal{O}(d)} \log n}$ in low dimensions~\cite{arya1998optimal} or roughly $\bigO{dn^{1+\frac{1}{2 (1+\epsilon)^2 -1}}}$ in high dimensions~\cite{andoni2015optimal}. Alas, the first bound suffers from exponential dependence on the dimension, while the second bound is significantly subquadratic only for relatively large approximation factors.



\subsection{Our Results}

In this paper we overcome this bottleneck and present the first $(1+\epsilon)$-approximate algorithm for estimating Chamfer distance that has a  {\em near-linear} running time, both in theory and in practice. Concretely, our contributions are as follows:
\begin{itemize}[leftmargin=*]
    \item When the underlying metric $d_X$ is defined by the $\ell_1$ or $\ell_2$ norm, we give an algorithm that runs in time $\bigO{nd \log (n)/\epsilon^2}$ and estimates the Chamfer distance up to $1\pm \eps$ with $99 \%$ probability (see Theorem~\ref{thm:estimating-chamfer-nearly-linear}). In general, our algorithm works for any metric $d_X$ supported by Locality-Sensitive Hash functions (see Definition~\ref{def:lsh-all-scale}), with the algorithm running time depending on the parameters of those functions. Importantly, the algorithm is quite easy to implement, see Figures~\ref{fig:alg-main} and \ref{fig:crude-nn}.
    \item For the more general problem of {\em reporting} a mapping $g:A \to B$ whose cost $\sum_{a \in A} d_X(a,g(a))$  is within a factor of $1+\eps$ from $\CH(A,B)$, we show that, under a popular complexity-theoretic conjecture, an algorithm with a running time analogous to that of our {\em estimation} algorithm does not exist, even when $d_X(a,b)=\|a-b\|_1$. 
    Specifically, under a Hitting Set Conjecture~\cite{williams2018some}, any such algorithm must run in time $\Omega(n^{2-\delta})$ for any constant $\delta>0$, even when the dimension $d=\Theta(\log^2 n)$ and $\eps=\frac{\Theta(1)}{d}$. (In contrast, our estimation algorithm runs in near-linear time for such parameters). This demonstrates that, for the Chamfer distance, estimation is significantly easier than reporting.
    \item We experimentally evaluate our algorithm on real and synthetic data sets. Our experiments demonstrate the effectiveness of our algorithm for both low and high dimensional datasets and across different dataset scales. Overall, it is much faster (\textbf{>5x}) than brute force (even accelerated with KD-trees) and both faster and more sample efficient (\textbf{5-10x}) than simple uniform sampling. We demonstrate the scalability of our method by running it on \emph{billion-scale} Big-ANN-Benchmarks datasets \cite{simhadri2022results}, where it runs up to \textbf{50x} faster than optimized brute force. 
    In addition, our method is robust to different datasets: while uniform sampling performs reasonably well for some datasets in our experiments, it performs poorly on datasets where the distances from points in $A$ to their neighbors in $B$ vary significantly. In such cases, our algorithm is able to adapt its importance sampling probabilities appropriately and obtain significant improvements over uniform sampling.  
\end{itemize}

\section{Algorithm and Analysis}\label{sec:algorithm}


In this section, we establish our main result for estimating Chamfer distance:

\begin{theorem}[Estimating Chamfer Distance in Nearly Linear Time]
\label{thm:estimating-chamfer-nearly-linear}
Given as input two datasets $A, B \subset \mathbb{R}^d$ such that $|A|, |B| \le n$, and an accuracy parameter $0< \eps<1$, \textup{$\ChamferEstimate$} runs in time $\bigO{nd\log(n)/\eps^2}$ and outputs an estimator $\eta$ such that with probability at least $99/100$,
\begin{equation*}
    \Paren{1-\eps} \emph{\CH}(A,B) \leq \eta \leq \Paren{ 1+\eps} \emph{\CH}(A,B),
\end{equation*}
when the underlying metric is Euclidean $(\ell_2)$ or Manhattan $(\ell_1)$ distance.
\end{theorem}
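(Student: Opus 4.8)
\emph{Proof strategy.} The plan is to estimate $\CH(A,B)=\sum_{a\in A}D(a)$, where $D(a):=\min_{b\in B}d_X(a,b)$ is the distance from $a$ to its nearest neighbor in $B$, by \emph{importance sampling} over $A$. The point is that if one could sample $a\in A$ with probability exactly $D(a)/\CH(A,B)$, then evaluating $D(a)\cdot\CH(A,B)/D(a)$ on a single sample would recover $\CH(A,B)$ with no variance at all; since we cannot afford the $n$ exact nearest-neighbor queries this would require, the idea is to first compute, in near-linear time, cheap surrogate values $\tilde D(a)$ satisfying (i) $\tilde D(a)\ge D(a)$ for every $a$ (never underestimate), and (ii) $S:=\sum_{a\in A}\tilde D(a)\le O(\log n)\cdot\CH(A,B)$ (not too large in aggregate). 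The estimator \ChamferEstimate\ then draws $a\in A$ with probability $\tilde p_a=\tilde D(a)/S$, computes $D(a)$ \emph{exactly} for this single point by a brute-force scan over $B$ in time $O(nd)$, outputs $Z=D(a)\cdot S/\tilde D(a)$, and averages over $m$ independent repetitions.

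The analysis of $Z$ is then immediate. It is unbiased, $\E[Z]=\sum_a\tilde p_a\cdot D(a)S/\tilde D(a)=\sum_a D(a)=\CH(A,B)$, and its second moment is controlled: $\E[Z^2]=\sum_a\tilde p_a\,(D(a)S/\tilde D(a))^2=S\sum_a D(a)^2/\tilde D(a)\le S\sum_a D(a)=S\cdot\CH(A,B)\le O(\log n)\,\CH(A,B)^2$, where the two inequalities use (i) and (ii) respectively. Hence a single sample has relative variance $O(\log n)$, so by Chebyshev's inequality the average of $m=O(\log n/\eps^2)$ independent copies lies in $(1\pm\eps)\CH(A,B)$ with probability $99/100$ (splitting the failure budget between this event and the event that (ii) holds, which we arrange by a Markov bound with suitably chosen constants). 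The sampling phase costs $O(nd)$ per sample, i.e. $O(nd\log n/\eps^2)$ overall, which is the target bound, so it only remains to build the surrogates $\tilde D(a)$ in $O(nd\log n)$ time.

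The construction of the surrogates is the crux and is the job of \CrudeNN. The idea is to hash $A\cup B$ with Locality-Sensitive Hash functions (Definition~\ref{def:lsh-all-scale}) at a geometric sequence of $O(\log n)$ scales $r=2^i$ --- after first reducing the aspect ratio of the input to $\mathrm{poly}(n)$, which can be done without changing the answer --- using a small (constant) number of independent functions at each scale; then for each $a\in A$ we let $\tilde b(a)\in B$ be an \emph{actual} point of $B$ sharing a bucket with $a$ at the smallest scale for which any such collision occurs, and set $\tilde D(a)=d_X(a,\tilde b(a))$. Returning a genuine point of $B$ makes (i) automatic. For (ii), the collision guarantees of Definition~\ref{def:lsh-all-scale} ensure that $a$ collides with its true nearest neighbor already at a scale $r=O(D(a))$ with constant probability per trial, so $\E[\tilde D(a)]=O(\log n)\cdot D(a)$ (the logarithm absorbing the amplification of collision probabilities and the diameter of a bucket), whence $\E[S]\le O(\log n)\,\CH(A,B)$ and (ii) follows by Markov. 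Hashing every point at every scale costs $O(nd\log n)$, which together with the sampling phase gives the claimed running time. I expect the main obstacle to be precisely this \CrudeNN\ step: one must design the hashing so that it simultaneously runs in near-linear time, never underestimates $D(a)$, and overestimates it by at most a polylogarithmic factor in aggregate --- which in particular requires controlling the diameter of LSH buckets so that a spurious far-away collision cannot inflate $\tilde D(a)$, and checking that $\ell_1$ and $\ell_2$ both admit LSH families with a sufficient gap between near- and far-collision probabilities.
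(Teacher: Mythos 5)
Your proposal is correct and matches the paper's approach essentially step for step: importance sampling with surrogate weights $\tilde D(a)$ obtained from LSH at $O(\log n)$ geometrically spaced scales, the same second-moment computation $\E[Z^2]\le S\cdot\CH(A,B)$, Chebyshev with $m=O(\log n/\eps^2)$ samples, a Markov bound to control $S=\sum_a\tilde D(a)$, and a reduction to bounded aspect ratio (the paper also reduces $\ell_2$ to $\ell_1$ via a random embedding rather than invoking an $\ell_2$ LSH directly, a minor variant of what you suggest). The one place your argument remains a sketch --- the bound $\E[\tilde D(a)]=O(\log n)\,D(a)$ --- is exactly where the paper does the careful work (decomposing into ``a far point collides at a scale below $i_0$'' and ``no collision until scale $i>i_0$, then a far collision,'' and integrating the tail), but you correctly identify both the mechanism (near-collision probability from property~1 forces a collision by scale $O(D(a))$ with good probability; property~2 keeps bucket ``diameter'' at scale $r$ to $O(r\log n)$) and the fact that this is the crux.
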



\begin{figure}[htb!]
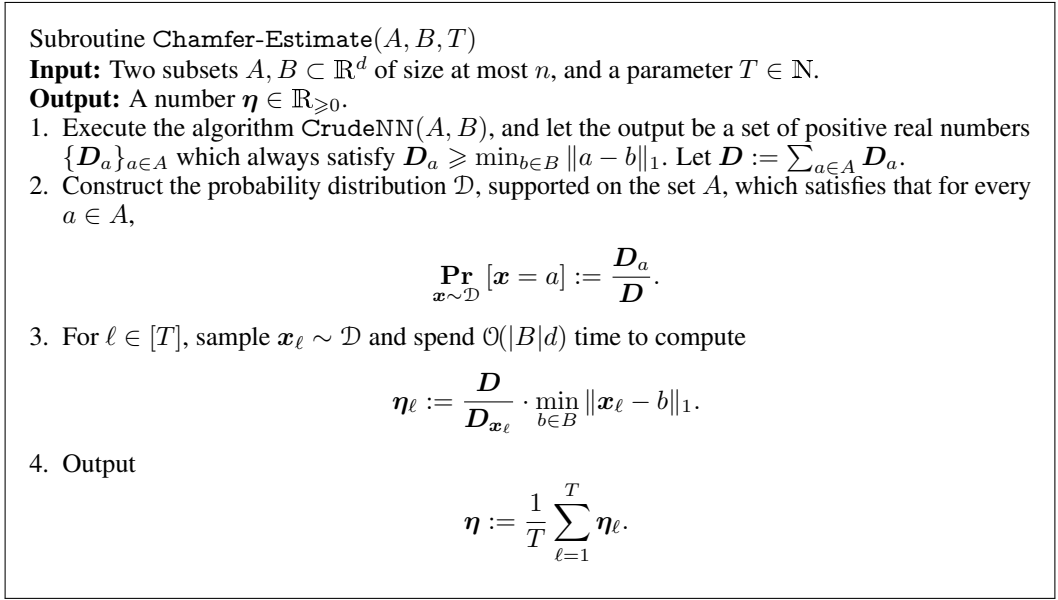

	\begin{framed}
		\noindent Subroutine $\ChamferEstimate(A, B, T)$
		
		\begin{flushleft}
			\noindent {\bf Input:} Two subsets $A, B \subset \R^d$ of size at most $n$, and a parameter $T \in \N$. 
			
			\noindent {\bf Output:} A number $\boldeta \in \R_{\geq 0}$. 
			
			\begin{enumerate}[leftmargin=*]
				\item\label{ln:first} Execute the algorithm $\CrudeNN(A,B)$, and let the output be a set of  positive real numbers $\{ \bD_a \}_{a \in A}$ which always satisfy $\bD_a \geq \min_{b \in B} \|a-b\|_1$. Let $\bD := \sum_{a \in A} \bD_a$. 
				\item\label{ln:two} Construct the probability distribution $\calD$, supported on the set $A$, which satisfies that for every $a \in A$,
				\begin{align*}
					\Prx_{\bx \sim \calD}\left[ \bx = a \right] := \frac{\bD_a}{\bD}. 
				\end{align*}
				\item\label{ln:three} For $\ell \in [T]$, sample $\bx_{\ell} \sim \calD$ and spend $\bigO{|B|d}$ time to compute
				\[ \boldeta_{\ell} := \frac{\bD}{\bD_{\bx_{\ell}}} \cdot \min_{b\in B} \| \bx_{\ell} - b\|_1. \]
				\item Output 
				\[ \boldeta :=  \frac{1 }{T} \sum_{\ell=1}^T \boldeta_{\ell}. \]
			\end{enumerate}
		\end{flushleft}
	\end{framed}
	\caption{The $\ChamferEstimate$ Algorithm.}\label{fig:alg-main}
\end{figure}

For ease of exposition, we make the simplifying assumption that the underlying metric is Manhattan distance, i.e. $d_X (a,b) =\| a-b\|_1$. Our algorithm still succeeds whenever the underlying metric admits a locality-sensitive hash function (see Definition~\ref{def:lsh-all-scale}). 

\paragraph{Uniform vs Importance Sampling.}
A natural algorithm for estimating $\CH(A,B)$ proceeds by \emph{uniform sampling}: sample an $a \in A$ uniformly at random and explicitly compute $\min_{b \in B} \|a-b\|_1$. In general, we can compute the estimator $\hat{z}$ for $\CH(A,B)$ by averaging over $s$ uniformly chosen samples, resulting in runtime $\bigO{nds}$. It is easy to see that the resulting estimator is un-biased, i.e. $\expecf{}{\hat{z}} = \CH(A,B)$. However, if a small constant fraction of elements in $A$ contribute significantly to $\CH(A,B)$, then $s = \Omega(n)$ samples could be necessary to obtain, say, a $1\%$ relative error estimate with constant probability. Since each sample requires a linear scan to find the nearest neighbor, this would result  in a quadratic runtime. 

While such an approach has good empirical performance for well-behaved datasets, it does not work for data sets where the distribution of the distances from points in $A$ to their nearest neighbors in $B$ is skewed.  Further, it is computationally prohibitive to verify the quality of the approximation given by uniform sampling. Towards proving Theorem~\ref{thm:estimating-chamfer-nearly-linear}, it is paramount to obtain an algorithm that works regardless of the structure of the input dataset.

A more nuanced approach is to perform \emph{importance sampling} where we sample $a \in A$ with probability proportional to its contribution to $\CH(A,B)$. In particular, if we had access to a distribution, $\bD_a$, over elements  $a\in A$ such that, $\min_{b\in B}\norm{a-b}_1 \leq \bD_a \leq\lambda \min_{b\in B}\norm{a-b}_1 $, for some parameter $\lambda>1$, then sampling $O\Paren{\lambda}$ samples results in an estimator $\hat{z}$ that is within $1\%$ relative error to the true answer with probability at least $99\%$. Formally, we consider the estimator defined in Algorithm~\ref{fig:alg-main}, where we assume access to $\CrudeNN(A, B)$, a sub-routine which receives as input $A$ and $B$ and outputs estimates $\bD_a \in \R_{\geq 0}$ for each $a \in A$ which is guaranteed to be an upper bound for $\min_{b \in B} \| a - b\|_1$. Based on the values $\{ \bD_a \}_{a \in A}$ we construct an importance sampling distribution $\calD$ supported on $A$. As a result, we obtain the following lemma:


\begin{lemma}[Variance Bounds for Chamfer Estimate]\label{lem:low-variance-estimate}
Let $n, d \in \N$ and suppose $A, B$ are two subsets of $\R^d$ of size at most $n$. For any $T\in \N$, the output $\boldeta$ of $\emph{\ChamferEstimate}(A, B, T)$ satisfies
\begin{align*}
\Ex\left[ \boldeta \right] &= \emph{\CH}(A, B), \\
\Varx\left[ \boldeta \right] &\leq \frac{1}{T} \cdot \emph{\CH}(A, B)^2 \left( \frac{\bD}{\emph{\CH}(A, B)} - 1 \right),
\end{align*}
for $\bD$ from Line~\ref{ln:first} in Figure~\ref{fig:alg-main}. The expectations and variance are over the randomness in the samples of Line~\ref{ln:three} of $\emph{\ChamferEstimate}(A, B, T)$. In particular, 
\begin{align*}
\Prx\Big[ \left| \boldeta - \emph{\CH}(A,B) \right| \geq \eps \cdot \emph{\CH}(A, B) \Big] \leq \frac{1}{\eps^2 \cdot T} \left(\frac{\bD}{\emph{\CH}(A, B)} - 1\right).
\end{align*}
\end{lemma}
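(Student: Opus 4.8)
The plan is to analyze the estimator $\boldeta = \frac{1}{T}\sum_{\ell=1}^T \boldeta_\ell$ as an average of i.i.d. random variables, each of which is an importance-sampling estimator for $\CH(A,B)$. First I would establish unbiasedness: for a single sample $\bx \sim \calD$, write
\begin{align*}
\Ex[\boldeta_\ell] = \sum_{a \in A} \Prx_{\bx \sim \calD}[\bx = a] \cdot \frac{\bD}{\bD_a} \cdot \min_{b \in B}\|a - b\|_1 = \sum_{a \in A} \frac{\bD_a}{\bD}\cdot\frac{\bD}{\bD_a}\cdot\min_{b \in B}\|a-b\|_1 = \CH(A,B),
\end{align*}
using the definition of $\calD$ from Line~\ref{ln:two}; the $\bD_a$ factors cancel exactly (and this is why we need $\bD_a > 0$, guaranteed by $\CrudeNN$ outputting positive reals, so no division by zero). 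Linearity of expectation then gives $\Ex[\boldeta] = \CH(A,B)$.

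Next I would bound the variance of a single term $\boldeta_\ell$. Since $\Varx[\boldeta_\ell] \le \Ex[\boldeta_\ell^2]$, I compute
\begin{align*}
\Ex[\boldeta_\ell^2] = \sum_{a \in A} \frac{\bD_a}{\bD} \cdot \frac{\bD^2}{\bD_a^2}\cdot \Paren{\min_{b \in B}\|a-b\|_1}^2 = \bD \sum_{a \in A} \frac{\Paren{\min_{b\in B}\|a-b\|_1}^2}{\bD_a}.
\end{align*}
The key structural fact is the guarantee $\min_{b \in B}\|a-b\|_1 \le \bD_a$ from Line~\ref{ln:first}, which lets me replace one factor of $\min_{b\in B}\|a-b\|_1$ in the numerator by $\bD_a$, yielding $\Ex[\boldeta_\ell^2] \le \bD \sum_{a \in A} \min_{b\in B}\|a-b\|_1 = \bD \cdot \CH(A,B)$. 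Therefore $\Varx[\boldeta_\ell] \le \bD\cdot\CH(A,B) - \CH(A,B)^2 = \CH(A,B)^2\Paren{\frac{\bD}{\CH(A,B)} - 1}$. Since the $\boldeta_\ell$ are independent, $\Varx[\boldeta] = \frac{1}{T}\Varx[\boldeta_1]$, which gives the claimed variance bound.

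Finally, the concentration statement follows immediately from Chebyshev's inequality: $\Prx[|\boldeta - \CH(A,B)| \ge \eps\CH(A,B)] \le \frac{\Varx[\boldeta]}{\eps^2 \CH(A,B)^2} \le \frac{1}{\eps^2 T}\Paren{\frac{\bD}{\CH(A,B)} - 1}$, using $\Ex[\boldeta] = \CH(A,B)$. I do not anticipate a genuine obstacle here — the argument is a clean second-moment computation — but the one point requiring care is the correct use of the two-sided guarantee on $\bD_a$: the \emph{lower} bound $\bD_a \ge \min_{b\in B}\|a-b\|_1$ is what controls the variance (a crude $\bD_a$ that badly overestimates makes $\bD/\CH(A,B)$ large and the estimator weak), while positivity of $\bD_a$ is what makes the estimator well-defined and unbiased. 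One should also note that $\bD \ge \CH(A,B)$ by the lower bound guarantee, so the quantity $\frac{\bD}{\CH(A,B)} - 1$ is nonnegative, consistent with it being a variance bound.
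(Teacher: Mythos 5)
Your proof is correct and follows essentially the same route as the paper: unbiasedness by direct cancellation of the $\bD_a$ factors, a second-moment bound using $\min_{b\in B}\|a-b\|_1 \le \bD_a$ to get $\Ex[\boldeta_\ell^2] \le \bD\cdot\CH(A,B)$, and then Chebyshev. (Your write-up is in fact a bit more explicit, and you correctly use the $\ell_1$ norm where the paper's displayed calculation has a stray $\|\cdot\|_2$.)
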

The proof follows from a standard analysis of importance sampling and is deferred to Appendix~\ref{app:upper-bound}. Observe, if $\bD \leq \lambda \CH(A,B)$, it suffices to sample $T = O\Paren{\lambda/\eps^2}$ points in $A$, leading to a running time of $O\Paren{nd\lambda/\eps^2}$.

\begin{figure}[h!]
	\begin{framed}
		\noindent Subroutine $\CrudeNN(A,B)$
		
		\begin{flushleft}
			\noindent {\bf Input:} Two subsets $A,B$ of a metric space $(X, \|\cdot\|_1)$ of size at most $n$ such that all non-zero distances between any point in $A$ and any point in $B$ is between $1$ and $\poly(n/\eps)$. We assume access to a locality-sensitive hash family at every scale $\calH(r)$ for any $r \geq 0$ satisfying conditions of Definition~\ref{def:lsh-all-scale}. (We show in Appendix~\ref{app:upper-bound} that, for $\ell_1$ and $\ell_2$, the desired hash families exist, and that distances between $1$ and $\poly(n/\eps)$ is without loss of generality).
			
			\noindent {\bf Output:} A list of numbers $\{\bD_a\}_{a \in A}$ where $\bD_a \geq \min_{b\in B} \|a -b \|_1$. 
			
			\begin{enumerate}
				\item We instantiate $L = \bigO{\log (n/\eps)}$ and for $i \in \{ 0, \dots, L\}$, we let $r_i = 2^{i}$. 
				\item For each $i \in \{0, \dots, L\}$ sample a hash function $\bh_i \colon X \to U$ from $\bh_i \sim \calH(r_i)$.
				\item For each $a \in A$, find the smallest $i \in \{0,\dots, L\}$ for which there exists a point $b \in B$ with $\bh_i(a) = \bh_i(b)$, and set $\bD_a = \|a-b \|_1$.
				\begin{itemize}
					\item The above may be done by first hashing each point $b \in B$ and $i \in \{0,\dots, L\}$ according to $\bh_i(b)$. Then, for each $a \in A$, we iterate through $i \in \{0, \dots, L\}$ while hashing $a$ according to $\bh_i(a)$ until the first $b \in B$ with $\bh_i(a) = \bh_i(b)$ is found.
				\end{itemize}
			\end{enumerate}
		\end{flushleft}
	\end{framed}
	\caption{The $\CrudeNN$ Algorithm.}\label{fig:crude-nn}
\end{figure}

\paragraph{Obtaining importance sampling probabilities.}
It remains to show how to implement the $\CrudeNN(A, B)$ subroutine to obtain the distribution over elements in $A$ which is a reasonable over-estimator of the true probabilities. 
A natural first step is to consider performing an $\bigO{\log n}$-approximate nearest neighbor search (NNS): for every $a' \in A$, find $b' \in B$ satisfying $\|a' - b'\|_1/\min_{b \in B} \|a' - b\|_1 = \bigO{\log n}$. This leads to the  desired guarantees on $\{\bD_a\}_{a \in A}$. Unfortunately, the state of the art algorithms for $\bigO{\log n}$-approximate NNS, even under the $\ell_1$ norm, posses extraneous $\mbox{poly}(\log n)$ factors in the runtime, resulting in a significantly higher running time. These factors are even higher for the $\ell_2$ norm. 
Therefore, instead of performing a direct reduction to approximate NNS, we open up the approximate NNS black-box and give a simple algorithm which directly satisfies our desired guarantees on $\{\bD_a\}_{a \in A}$.

To begin with, we assume that the aspect ratio of all pair-wise distances is bounded by a fixed polynomial, $\poly(n/\eps)$ (we defer the reduction from an arbitrary input to one with polynomially bounded aspect ratio to Lemma~\ref{lem:polynomial-aspect-ratio-suffices}).
We proceed via computing  $\bigO{\log(n/\eps)}$ different (randomized) partitions of the dataset $A \cup B$. The $i$-th partition, for $1 \le i \le \bigO{\log (n/\eps)}$, can be written as $A \cup B = \cup_j \mathcal{P}^i_j$ and approximately satisfies the property that points in $A \cup B$ that are at distance at most $2^i$ will be in the same partition $\mathcal{P}^i_j$ with sufficiently large probability. To obtain these components, we use a family of \emph{locality-sensitive hash functions}, whose formal properties are given in Definition \ref{def:lsh-all-scale}. Intuitively, these hash functions guarantee that:
\begin{enumerate}
    \item For each $a' \in A$, its \emph{true} nearest neighbor $b' \in B$ falls into the \emph{same} component as $a'$ in the $i_0$-th partition, where $2^{i_0} = \Theta(\|a' - b'\|_1$) \footnote{Recall we assumed all distances are between $1$ and $\poly(n)$ resulting in only $\bigO{\log n}$ different partitions}, and
    \item Every other extraneous $b \ne b'$ is \emph{not} in the same component as $a'$ for each $i < i_0$.
\end{enumerate}

It is easy to check that any hash function that satisfies the aforementioned guarantees yields a valid set of distances $\{\bD_a\}_{a \in A}$ as follows: for every $a' \in A$, find the smallest $i_0$ for which there exists a $b' \in B$ in the same component as $a'$ in the $i_0$-th partition. Then set $\bD_{a'} = \|a'-b'\|_1$. Intuitively, the $b'$ we find for any fixed $a'$ in this procedure will have distance that is at least the closest neighbor in $B$ and with good probability, it won't be too much larger. A caveat here is that we cannot show the above guarantee holds for $2^{i_0} =  \Theta(\|a' - b'\|_1)$.  Instead, we obtain the slightly weaker guarantee that, {\em in the expectation}, the partition $b'$ lands in is a $\bigO{\log n}$-approximation to the minimum distance, i.e.   $2^{i_0} =  \Theta(\log n \cdot \|a' - b'\|_1)$. Therefore, after running $\CrudeNN(A, B)$, setting $\lambda=\log n$ suffices for our $\bigO{nd \log(n)/\eps^2}$ time algorithm. We formalize this argument in the following lemma:

\begin{lemma}[Oversampling with bounded Aspect Ratio]
\label{lem:intro-oversampling}
Let $(X, d_X)$ be a metric space with a locality-sensitive hash family at every scale (see Definition~\ref{def:lsh-all-scale}). Consider two subsets $A, B \subset X$ of size at most $n$ and any $\eps \in (0, 1)$ satisfying
\[ 1 \leq \min_{\substack{a \in A, b \in B \\ a \neq b}} d_X(a,b) \leq \max_{\substack{a \in A, b \in B}} d_X(a,b) \leq \poly(n/\eps). \]
Algorithm~\ref{fig:crude-nn}, $\emph{\CrudeNN}(A,B)$, outputs a list of (random) positive numbers $\{ \bD_a \}_{a \in A}$ which satisfy the following two guarantees:
\begin{itemize}
\item With probability $1$, every $a \in A$ satisfies $\bD_a \geq \min_{b \in B} d_X(a, b)$. 
\item For every $a \in A$,  $\Ex[\bD_a] \leq \bigO{\log n} \cdot \min_{b\in B}d_X(a,b)$.
\end{itemize}
Further, Algorithm~\ref{fig:crude-nn}, runs in time $\bigO{dn\log(n/\eps)}$ time, assuming that each function used in the algorithm can be evaluated in $\bigO{d}$ time.
\end{lemma}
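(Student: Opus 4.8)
The plan is to establish the three assertions of Lemma~\ref{lem:intro-oversampling} separately, the expectation bound on $\bD_a$ being the only substantial part. The deterministic lower bound is immediate: $\CrudeNN$ always sets $\bD_a = d_X(a,b)$ for an honest point $b \in B$, so $\bD_a \geq \min_{b' \in B} d_X(a,b')$ with probability $1$. For the running time, there are $L+1 = \bigO{\log(n/\eps)}$ scales; we first evaluate every $\bh_i$ on every point of $B$ (this is $\bigO{nL}$ evaluations, each costing $\bigO{d}$ by assumption) and bucket the images into $L+1$ hash tables, and then, for each $a \in A$, we walk up the scales $i = 0, 1, \dots$, paying $\bigO{d}$ to compute $\bh_i(a)$ and $\bigO{1}$ to probe the $i$-th table, stopping at the first collision. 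In the worst case this is $\bigO{dL}$ per point in $A$ plus one $\bigO{d}$-time distance evaluation, for a total of $\bigO{dn\log(n/\eps)}$.

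For the expectation bound, fix $a \in A$, let $b^\star := \arg\min_{b \in B} d_X(a,b)$ and $\rho := d_X(a,b^\star)$, and let $i^\star$ be the smallest index with $r_{i^\star} = 2^{i^\star} \geq \rho$, so that $\rho \leq 2^{i^\star} < 2\rho$. Write $i_0 \in \{0,\dots,L\}$ for the (random) first scale at which $a$ collides with some point of $B$, so that $\bD_a = d_X(a,b)$ for some $b$ with $\bh_{i_0}(a) = \bh_{i_0}(b)$. I would introduce the auxiliary random variables $\mathbf{M}_k := \max\{ d_X(a,b) : b \in B,\ \bh_k(a) = \bh_k(b)\}$ (with the convention that the max of the empty set is $0$). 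Since $\bD_a \leq \mathbf{M}_{i_0}$ and $\mathbf{1}[i_0 = k] \leq \mathbf{1}[i_0 \geq k]$ while $\mathbf{M}_k \geq 0$, we obtain the pointwise inequality $\bD_a \leq \sum_{k=0}^{L} \mathbf{M}_k \, \mathbf{1}[i_0 \geq k]$. The key structural observation is that $\mathbf{1}[i_0 \geq k]$ is a function of $\bh_0,\dots,\bh_{k-1}$ only, hence independent of $\bh_k$, which governs $\mathbf{M}_k$; therefore $\Ex[\mathbf{M}_k \cdot \mathbf{1}[i_0 \geq k]] = \Ex[\mathbf{M}_k] \cdot \Prx[i_0 \geq k]$.

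It then remains to bound the two factors. For $\Prx[i_0 \geq k]$ with $k > i^\star$: the event $\{i_0 \geq k\}$ implies in particular that $a$ fails to collide with $b^\star$ under each of $\bh_{i^\star},\dots,\bh_{k-1}$, and all these scales satisfy $r_i = 2^i \geq \rho = d_X(a,b^\star)$, so the collision guarantee of Definition~\ref{def:lsh-all-scale} (which makes $b^\star$ fall in $a$'s bucket with good, and increasingly good, probability once the scale exceeds $\rho$), together with independence across scales, makes $\Prx[i_0 \geq k]$ decay fast enough in $k - i^\star$ that $\sum_{k=0}^{L} 2^k \, \Prx[i_0 \geq k] = \bigO{2^{i^\star}} = \bigO{\rho}$. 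For $\Ex[\mathbf{M}_k]$, write $\Ex[\mathbf{M}_k] = \int_0^\infty \Prx[\mathbf{M}_k > t]\,dt$ and bound, by a union bound over the at most $n$ points of $B$, $\Prx[\mathbf{M}_k > t] \leq \min\{1,\ \sum_{b :\ d_X(a,b) > t} \Prx[\bh_k(a) = \bh_k(b)]\}$; the separation/decay guarantee of Definition~\ref{def:lsh-all-scale} makes each summand negligibly small once $d_X(a,b)$ exceeds a constant times $2^k \log n$, so $\Prx[\mathbf{M}_k > t]$ stays at $1$ only until $t = \bigO{2^k \log n}$ and then drops off quickly, yielding $\Ex[\mathbf{M}_k] = \bigO{2^k \log n}$. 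Combining, $\Ex[\bD_a] \leq \sum_{k=0}^{L} \Ex[\mathbf{M}_k] \, \Prx[i_0 \geq k] = \bigO{\log n} \cdot \sum_{k} 2^k \Prx[i_0 \geq k] = \bigO{\log n} \cdot \bigO{\rho} = \bigO{\log n} \cdot \min_{b \in B} d_X(a,b)$, as claimed.

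The step I expect to be the crux is the final balancing act: the union bound over $B$ inside $\Ex[\mathbf{M}_k]$ is precisely where the $\bigO{\log n}$ factor of the lemma originates, and for the geometric growth $2^k$ of the scales to be absorbed we genuinely need the tail of $\Prx[i_0 \geq k]$ to decay faster than $2^{-(k - i^\star)}$, which is exactly what the strengthening collision probability of the $\ell_1$ and $\ell_2$ hash families (Definition~\ref{def:lsh-all-scale}, with the families exhibited in Appendix~\ref{app:upper-bound}) supplies; a merely constant collision probability would not suffice. A minor separate point: some $a$ may fail to collide under every $\bh_i$, but since every pairwise distance is at most $\poly(n/\eps)$ while $\rho \geq 1$, and this event has probability $\poly(n)^{-1}$ (taking $L$ a large enough multiple of $\log(n/\eps)$), defaulting $\bD_a$ to the diameter in that case perturbs $\Ex[\bD_a]$ by a negligible additive term and preserves the deterministic lower bound.
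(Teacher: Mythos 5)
Your proof is correct and reaches the same conclusion via a genuinely different decomposition from the paper's. Both arguments agree on the trivial lower bound (the algorithm always returns a true distance to some $b \in B$), the running time, and the fact that the $\bigO{\log n}$ factor enters through a union bound over the at most $n$ points of $B$. The difference is in how the expectation is organized. The paper writes $\Ex[\bD_a] \leq c\gamma_a + \int_{c\gamma_a}^{\infty} \Prx[\bD_a \geq \gamma]\,d\gamma$, and for each threshold $\gamma$ splits the bad event $\{\bD_a \geq \gamma\}$ into $\bE_1(\gamma)$ (a far collision already at a scale $\leq i_0$) and $\bE_2(\gamma)$ (first collision happens at some scale $i > i_0$ and is far), bounding each by combining the non-collision bound (Property 1) across scales $i_0,\dots,i-1$ with the tail bound (Property 2) at scale $i$, then integrating. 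You instead bound $\bD_a$ pointwise by $\sum_{k} \mathbf{M}_k\,\mathbf{1}[i_0 \geq k]$, observe that $\mathbf{M}_k$ and $\mathbf{1}[i_0 \geq k]$ are functions of disjoint, independently drawn hash functions, and so factor $\Ex[\mathbf{M}_k\,\mathbf{1}[i_0 \geq k]] = \Ex[\mathbf{M}_k]\cdot\Prx[i_0 \geq k]$; the per-scale tail integral $\Ex[\mathbf{M}_k] = \bigO{2^k \log n}$ carries the $\log n$ while the super-geometric decay of $\Prx[i_0 \geq k]$ (a product of $\rho/r_{i'} \leq 2^{-(i'-i^\star)}$ terms) absorbs the $2^k$ growth. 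Your version buys a cleaner modular separation of the two questions --- ``when is the first collision'' and ``how far is the colliding point'' --- at the cost of the slightly wasteful pointwise upper bound, which is then rescued by independence; the paper's version is a more direct integration over a single threshold. Both correctly rely on the decay in Property 2 being exponential rather than merely constant, a point you rightly flag as essential. Your final remark about the (super-polynomially rare) no-collision event and a diameter default is also a genuine gap in the algorithm's specification that the paper glosses over, and your fix is adequate.
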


\begin{proof}[Proof Sketch for Theorem~\ref{thm:estimating-chamfer-nearly-linear}]

Given the lemmas above, it is straight-forward to complete the proof of 
Theorem~\ref{thm:estimating-chamfer-nearly-linear}. 
First, we reduce to the setting where the aspect ratio is $\poly(n/\eps)$ (see Lemma~\ref{lem:polynomial-aspect-ratio-suffices} for a formal reduction).
We then invoke Lemma~\ref{lem:intro-oversampling} and apply Markov's inequality to obtain a set of distances $\bD_a$ such that with probability at least $99/100$, for each $a \in A$, $\min_{b\in B} \| a- b \|_1 \leq  \bD_a $ 
and $\sum_{a\in A} \bD_a \leq \bigO{\log(n)} \CH\Paren{A, B}$. 
We then invoke Lemma~\ref{lem:low-variance-estimate} and set the number of samples, $T=\bigO{\log(n)/\eps^2}$. 
The running time of our algorithm is then given by the time of $\CrudeNN(A, B)$, which is $O(nd \log(n/\eps))$, and the time needed to evaluate the estimator in Lemma~\ref{lem:low-variance-estimate}, requiring $\bigO{ nd\log(n)/\eps^2 }$ time. Refer to Section~\ref{app:upper-bound} for the full proof. 
\end{proof}

\section{Experiments}

We perform an empirical evaluation of our Chamfer distance estimation algorithm.

\paragraph{Summary of Results} 
Our experiments demonstrate the effectiveness of our algorithm for both low and high dimensional datasets and across different dataset sizes. Overall, it is much faster than brute force (even accelerated with KD-trees). Further, our algorithm is both faster and more sample-efficient than uniform sampling. It is also robust to different datasets: while uniform sampling performs well for most datasets in our experiments, it performs poorly on datasets where the distances from points in $A$ to their neighbors in $B$ vary significantly. In such cases, our algorithm is able to adapt its importance sampling probabilities appropriately and obtain significant improvements over uniform sampling.

\begin{table}[!ht]
\centering
{\renewcommand{\arraystretch}{1.3}
\begin{tabular}{c|c|c|c|c|c}
\textbf{Dataset} & \textbf{$|A|, |B|$}                             & \textbf{$d$} & \textbf{Experiment} & \textbf{Metric} & \textbf{Reference} \\ \hline
ShapeNet         & $\sim 8 \cdot 10^3, \sim 8 \cdot 10^3$    & $3$          & Small Scale         & $\ell_1$        &       \cite{chang2015shapenet}             \\
Text Embeddings  & $ 2.5\cdot 10^3, 1.8 \cdot 10^3$ & $300$        & Small Scale         & $\ell_1$        &     \cite{kusner2015word}          \\
Gaussian Points  & $5 \cdot 10^4, 5 \cdot 10^4$                    & $2$          & Outliers            & $\ell_1$        &     -               \\
DEEP1B           & $10^4, 10^9$                                    & $96$         & Large Scale         & $\ell_2$        &      \cite{babenko2016efficient}              \\
Microsoft-Turing & $10^5, 10^9$                                    & $100$        & Large Scale         & $\ell_2$        &     \cite{simhadri2022results}              
\end{tabular}
}
\caption{Summary of our datasets. For ShapeNet, the value of $|A|$ and $|B|$ is averaged across different point clouds in the dataset.\label{table:dataset}}
\end{table}

\subsection{Experimental Setup}

We use three different experimental setups, small scale, outlier, and large scale. They are designed to `stress test' our algorithm, and relevant baselines, under vastly different parameter regimes. The datasets we use are summarized in Table \ref{table:dataset}. For all experiments, we introduce uniform sampling as a competitive baseline for estimating the Chamfer distance, as well as (accelerated) brute force computation. All results are averaged across $20+$ trials and $1$ standard deviation error bars are shown when relevant.

\paragraph{Small Scale} 
These experiments are motivated from common use cases of Chamfer distance in the computer vision and NLP domains. In our small scale experiments, we use two different datasets: (a) the ShapeNet dataset, a collection of point clouds of objects in three dimensions \cite{chang2015shapenet}. ShapeNet is a common benchmark dataset frequently used in computer graphics, computer vision, robotics and Chamfer distance is a widely used measure of similarity between different ShapeNet point clouds \cite{chang2015shapenet}. 
(b) We create point clouds of words from text documents from \cite{kusner2015word}. Each point represents a word embedding obtained from the word-to-vec model of \cite{mikolov2013distributed} in $\R^{300}$ applied to the Federalist Papers corpus.  As mentioned earlier, a popular relaxation of the common Earth Mover Distance is exactly the (weighted) version of the Chamfer distance \cite{kusner2015word,atasu19a}.

Since ShapenNet is in three dimensions, we implement nearest neighbor queries using KD-trees to accelerate the brute force baseline  as KD-trees can perform exact nearest neighbor search quickly in small dimensions. However, they have runtime exponential in dimension meaning they cannot be used for the text embedding dataset, for which we use a standard naive brute force computation. For both these datasets, we implement our algorithms using Python 3.9.7 on an M1 MacbookPro with 32GB of RAM. We also use an efficient implementation of KD trees in Python and use Numpy and Numba whenever relevant. Since the point clouds in the dataset have approximately the same $n$ value, we compute the symmetric version $\CH(A,B) + \CH(B,A)$. For these experiments, we use the $\ell_1$ distance function.

\paragraph{Outliers} This experiment is meant to showcase the robustness of our algorithm. We consider two point clouds, $A$ and $B$, each sampled from Gaussian points in $\R^{100}$ with identity covariance. Furthermore, we add an "outlier" point to $A$ equal to $0.5n \cdot \textbf{1}$, where $\textbf{1}$ is the all ones vector.

This example models scenarios where the distances from points in $A$ to their nearest neighbors in $B$ vary significantly, and thus uniform sampling might not accurately account for all distances, missing a small fraction of large ones. 

\paragraph{Large Scale}
The purpose of these experiments is to demonstrate that our method scales to datasets with billions of points in hundreds of dimensions. We use two challenging approximate nearest neighbor search datasets: DEEP1B  \cite{babenko2016efficient} and Microsoft Turing-ANNS   \cite{simhadri2022results}. For these datasets, the set $A$ is the query data associated with the datasets. Due to the asymmetric sizes, we compute $\CH(A,B)$. These datasets are normalized to have unit norm and we consider the $\ell_2$ distance function.

These datasets are too large to handle using the prior configurations. Thus, we use a proprietary in-memory parallel implementation of the SimHash algorithm, which is an $\ell_2$ LSH family for normalized vectors according to Definition \ref{def:lsh-all-scale} \cite{charikar2002similarity}, on a 
shared virtual compute cluster with 2x64 core AMD Epyc 7763 CPUs (Zen3) with 2.45Ghz - 3.5GHz clock frequency, 2TB DDR4 RAM and 256 MB L3 cache. We also utilize parallization on the same compute cluster for naive brute force search.

\subsection{Results}

\paragraph{Small Scale}

\begin{figure}[!ht]
    \centering
    \subfloat[\centering ShapeNet]{{\includegraphics[width=4.65cm]{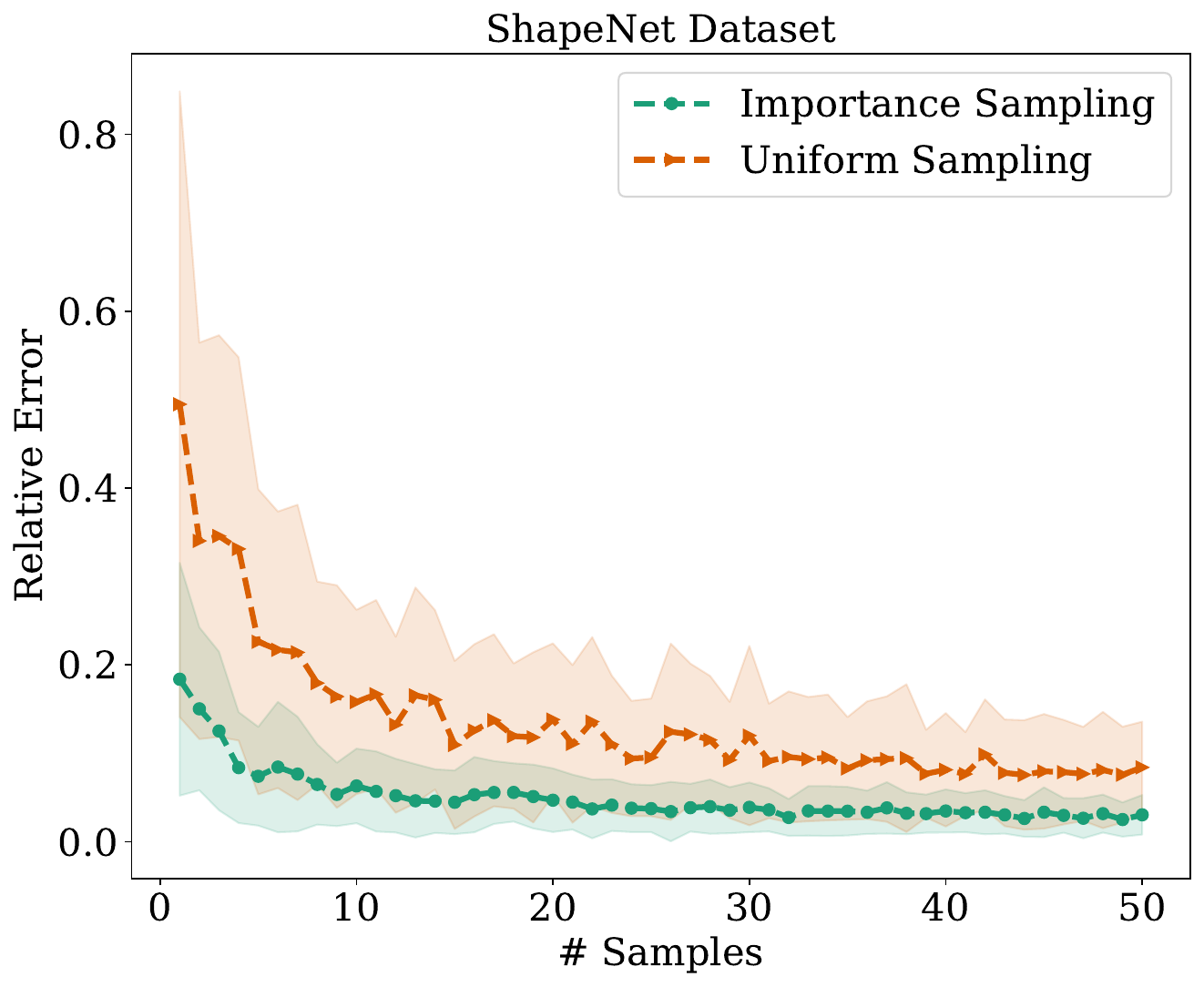}}}%
    \subfloat[\centering Federalist Papers]{{\includegraphics[width=4.65cm]{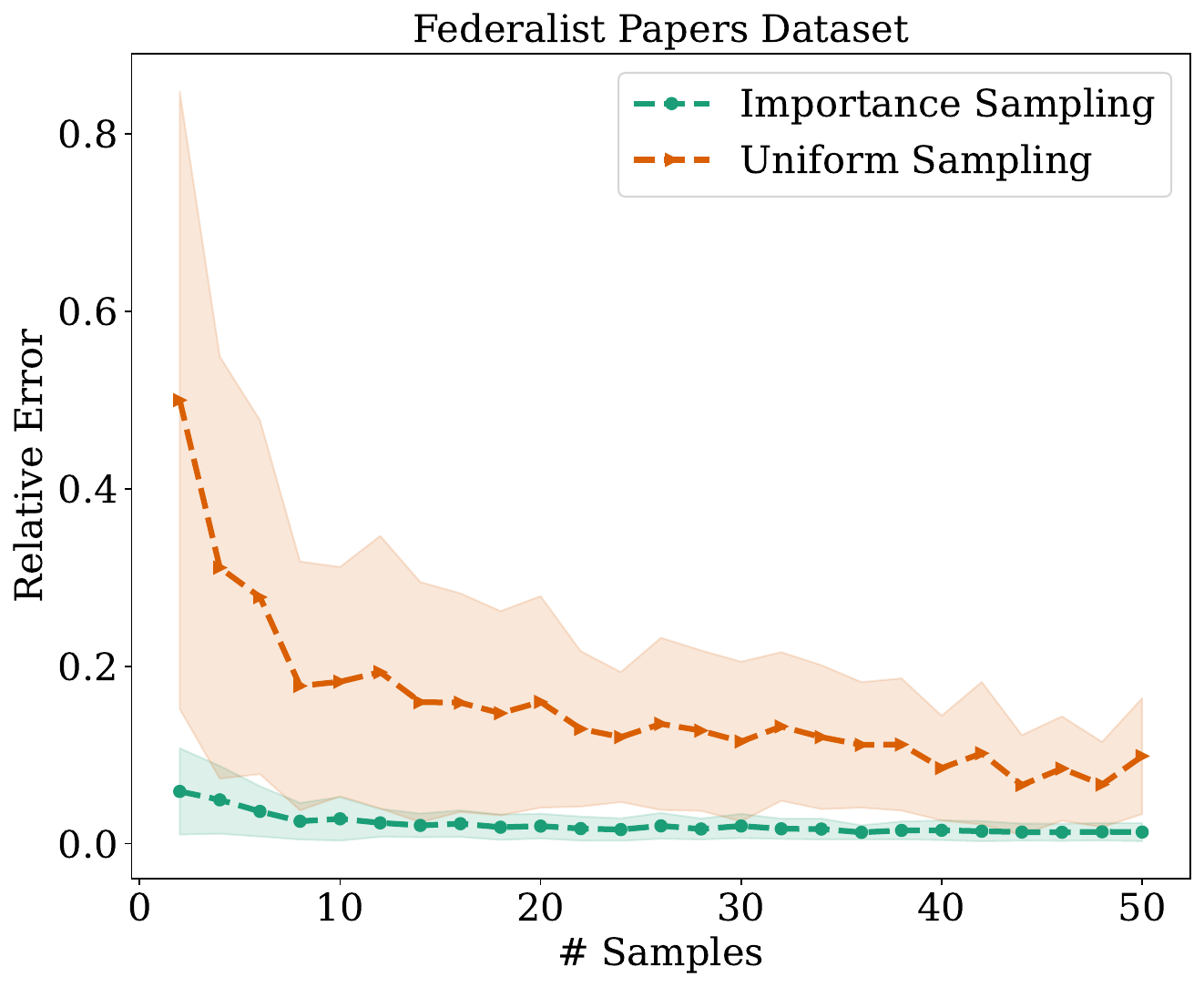} }}%
    \subfloat[\centering Gaussian Points]{{\includegraphics[width=4.65cm]{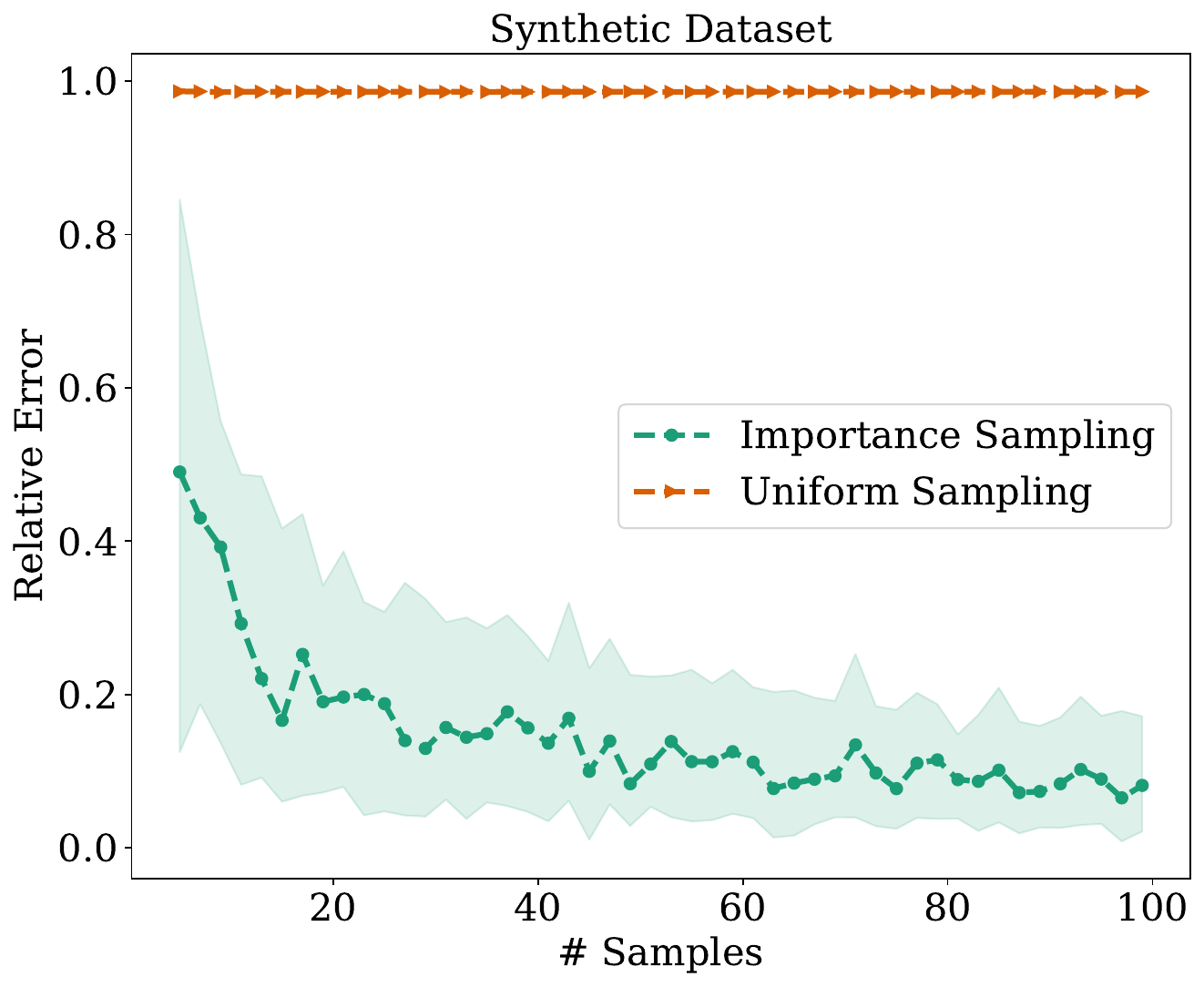} }}%
    \\
    \subfloat[\centering DEEP]{{\includegraphics[width=4.65cm]{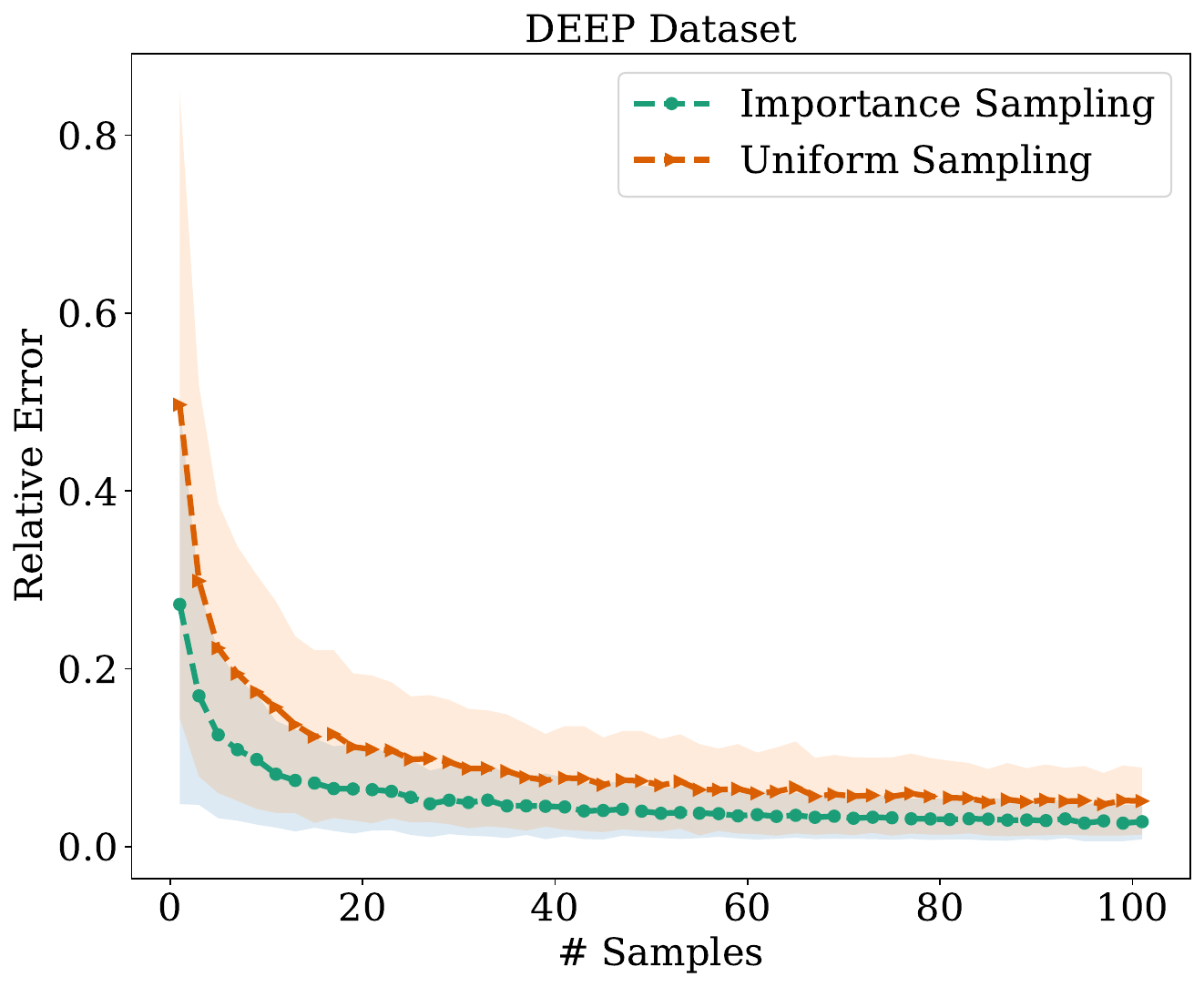} }}%
    \subfloat[\centering Turing]{{\includegraphics[width=4.65cm]{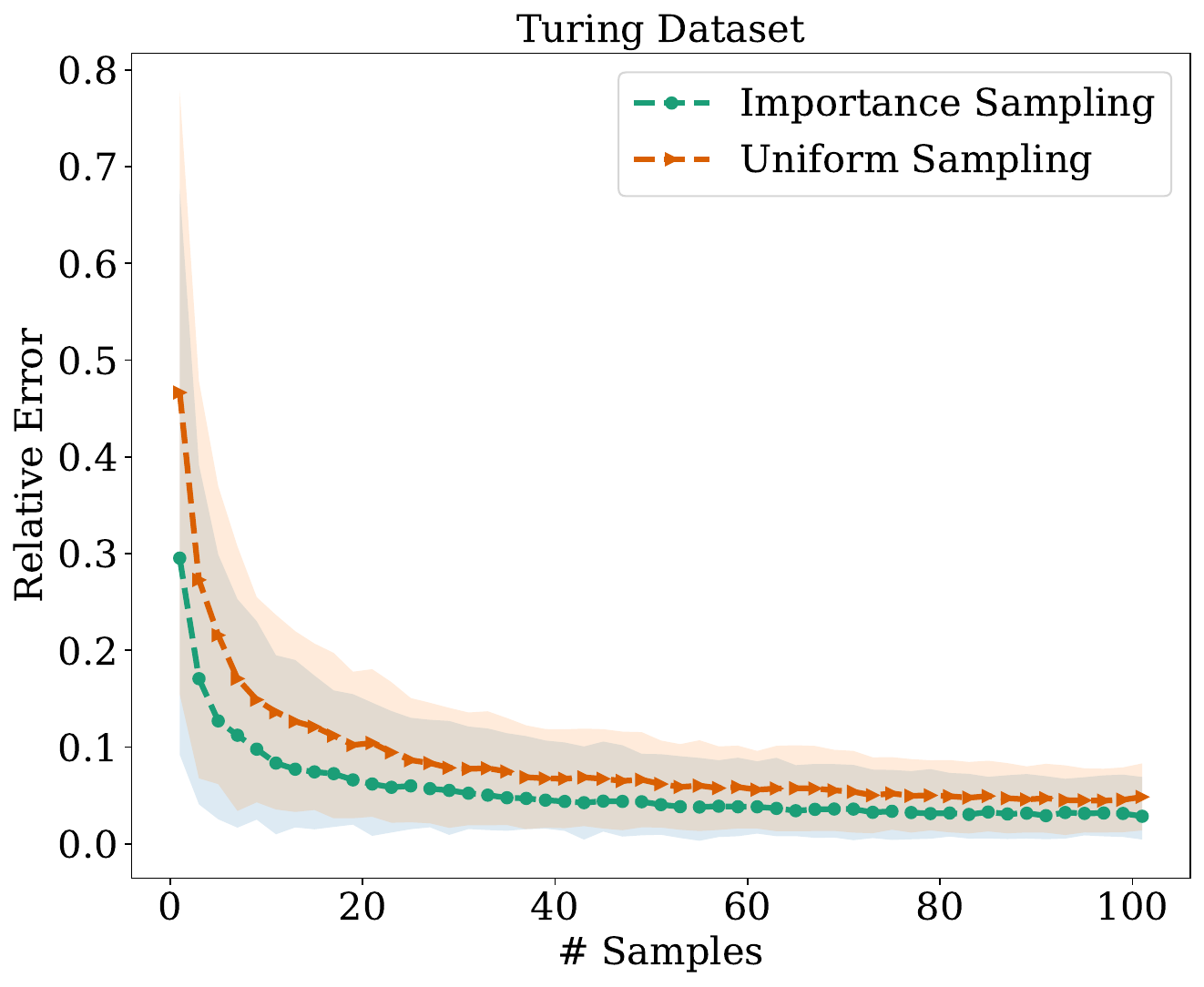} }}%
    \caption{Sample complexity vs relative error curves.}%
    \label{fig:sample_complexity1}%
\end{figure}

\begin{figure}[!ht]
    \centering
    \subfloat[\centering ShapeNet]{{\includegraphics[width=5cm]{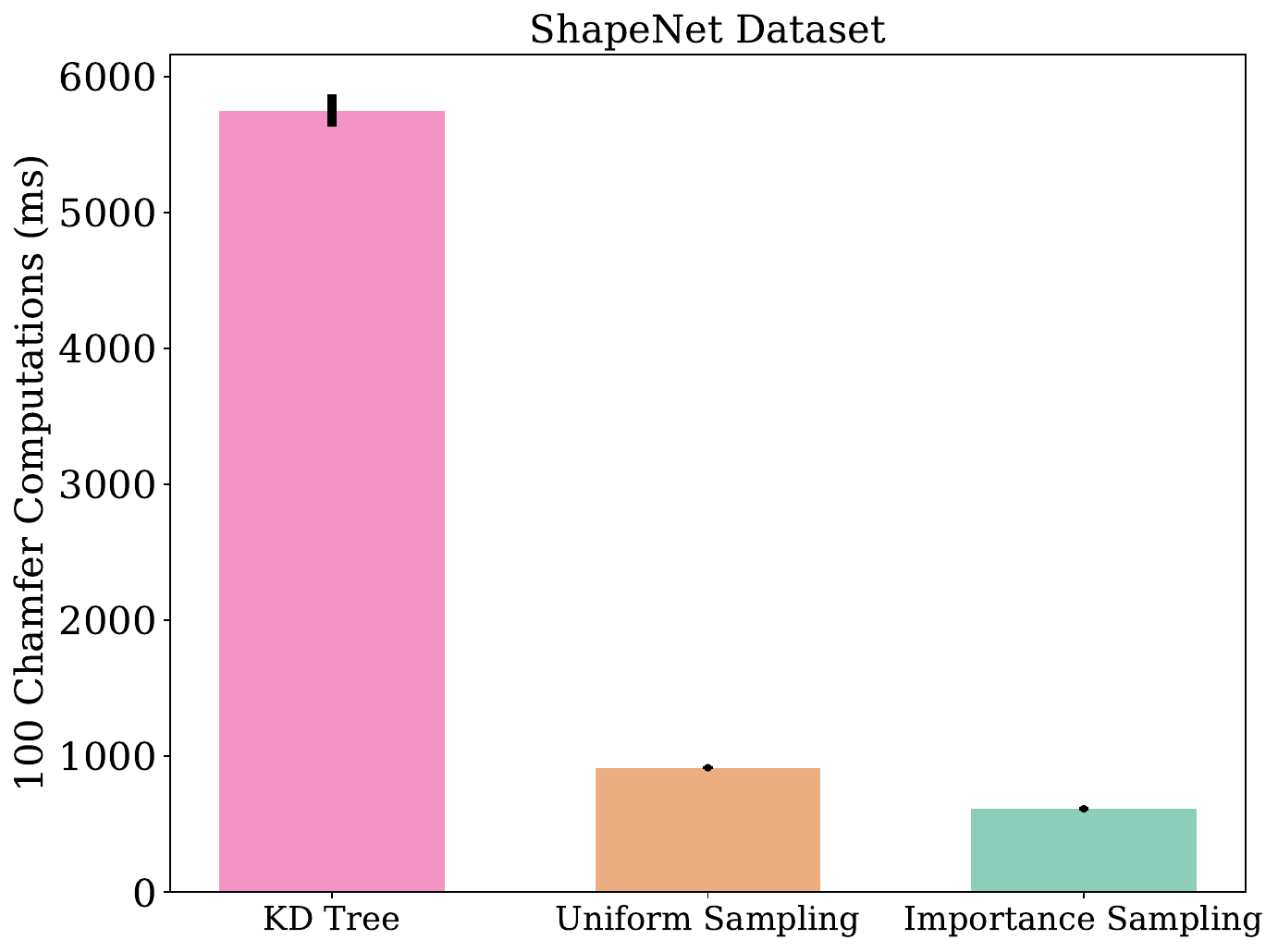} }}%
    \qquad
    \subfloat[\centering Federalist Papers]{{\includegraphics[width=5cm]{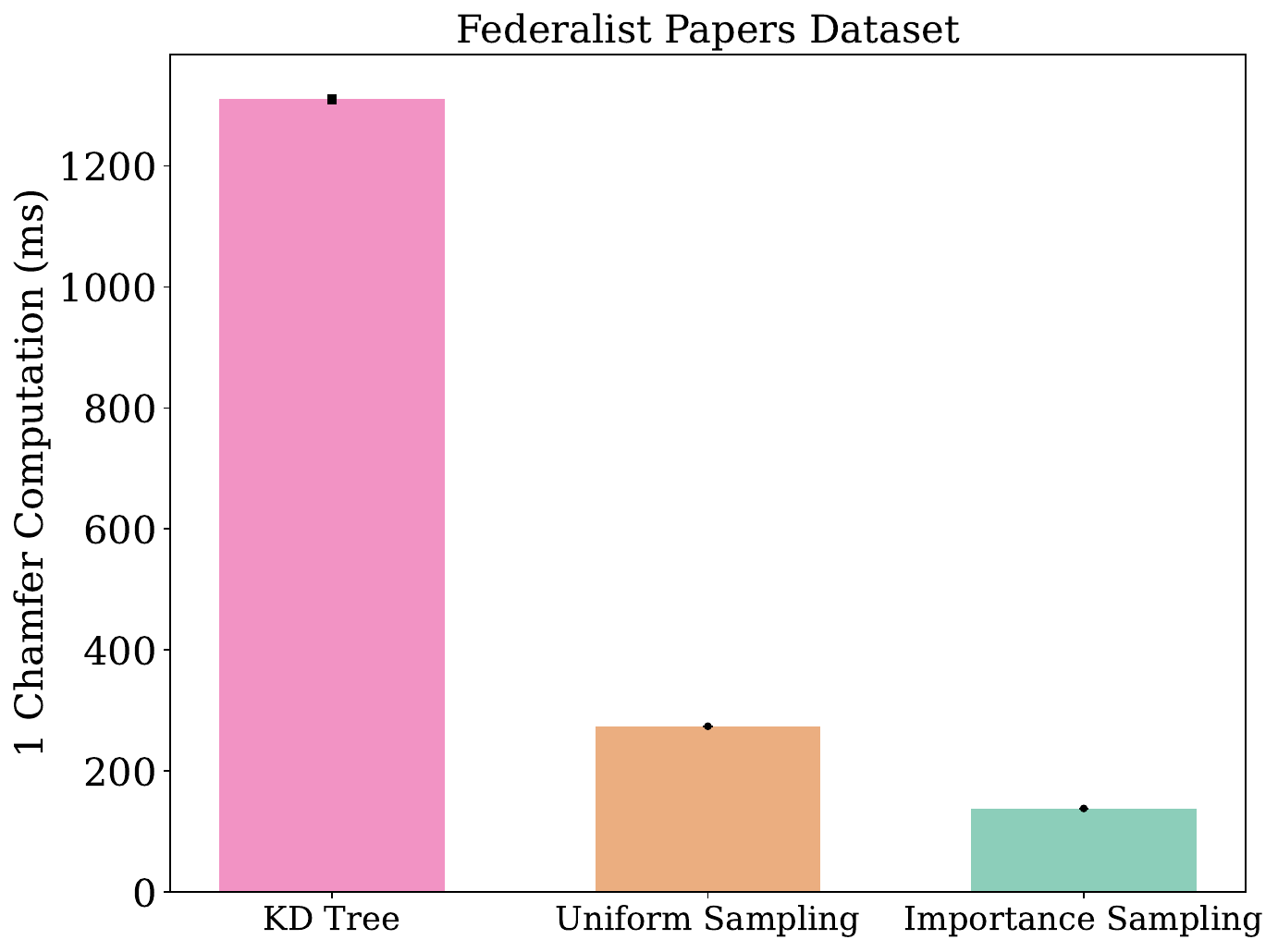} }}%
    \caption{Runtime experiments. We set the number of samples for uniform and importance sampling such that the relative errors of their respective approximations are similar.}%
    \label{fig:runtime_1}%
\end{figure}

\begin{figure}[!ht]
    \centering
    \subfloat[\centering DEEP]{{\includegraphics[width=5cm]{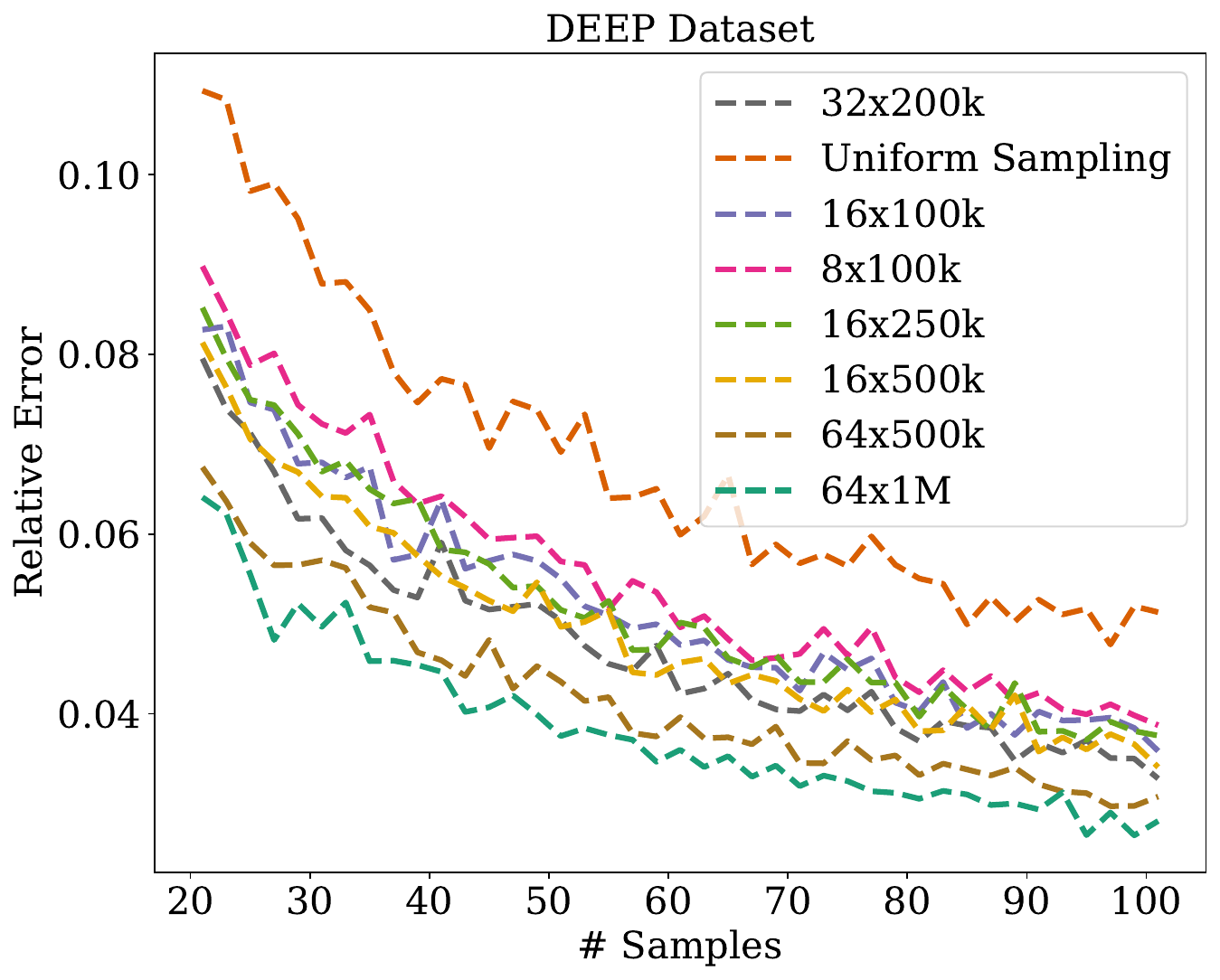} }}%
    \qquad
    \subfloat[\centering Turing]{{\includegraphics[width=5cm]{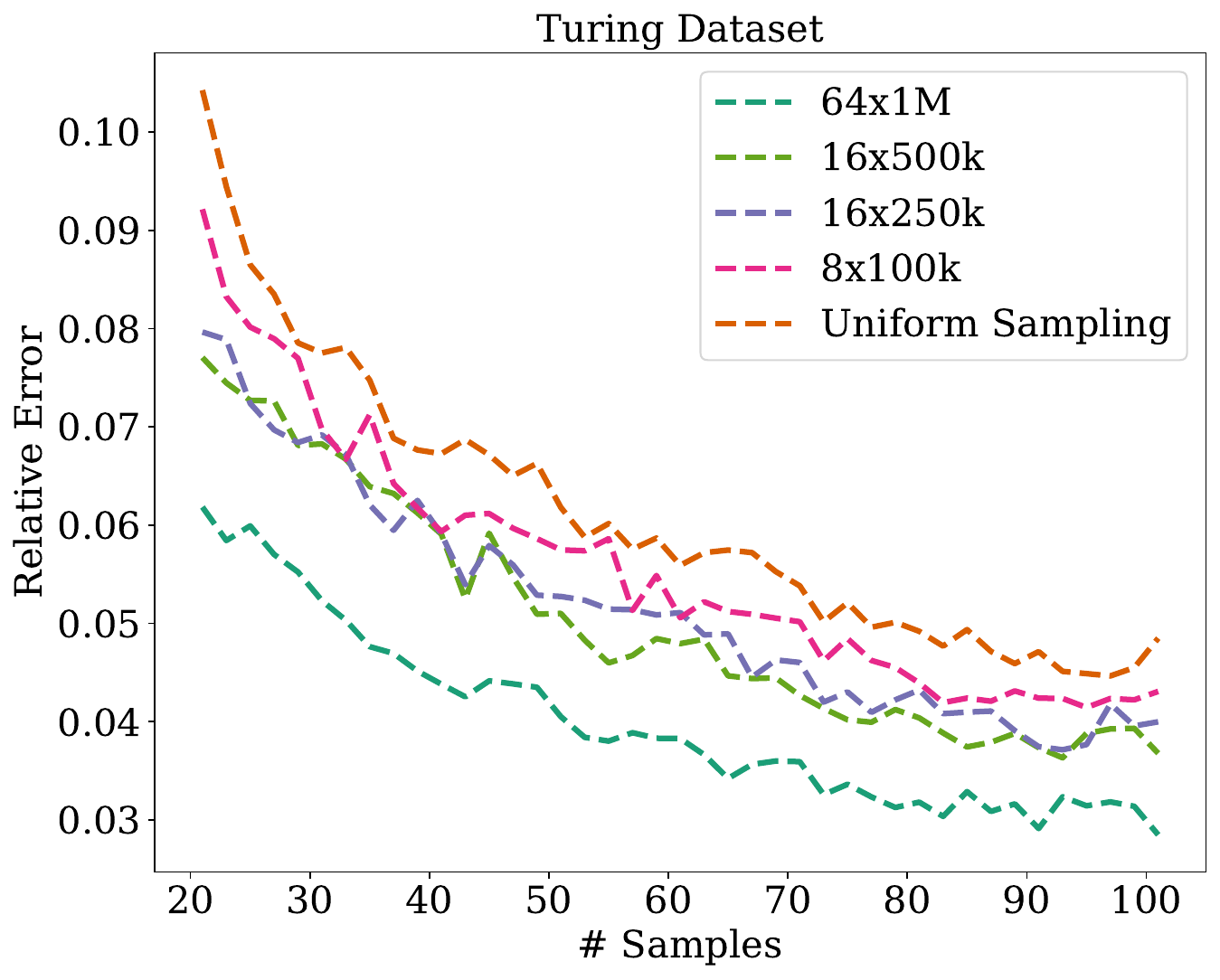} }}%
    \caption{The figures show sample complexity vs relative error curves as we vary the number of LSH data structures and window sizes. Each curve maps $k\times W$ where $k$ is the number of LSH data structures we use to repeatedly hash points in $B$ and $W$ is the window size, the number of points retrieved from $B$ that hash closest to any given $a$ at the smallest possible distance scales.\vspace{-5mm}}%
    \label{fig:more_largescale}%
\end{figure}

First we discuss configuring parameters. Recall that in our theoretical results, we use $\bigO{\log n}$ different scales of the LSH family in $\CrudeNN$. $\CrudeNN$ then computes (over) estimates of the nearest neighbor distance from points in $A$ to $B$ (in near linear time) which is then used for importance sampling by $\ChamferEstimate$. Concretely for the $\ell_1$ case, this the LSH family corresponds to imposing $\bigO{\log n}$ grids with progressively smaller side lengths. In our experiments, we treat the number of levels of grids to use as a tuneable parameter in our implementation and find that a very small number suffices for high quality results in the importance sampling phase.

Figure \ref{fig:more_shapenet} (b) shows that only using $3$ grid levels is sufficient for the crude estimates $\bD_a$ to be within a factor of $2$ away from the true nearest neighbor values for the ShapeNet dataset, averaged across different point clouds in the dataset. Thus for the rest of the Small Scale experiments, we fix the number of grid levels to be $3$.

Figure \ref{fig:sample_complexity1} (a) shows the sample complexity vs accuracy trade offs of our algorithm, which uses importance sampling, compared to uniform sampling. Accuracy is measured by the relative error to the true value. We see that our algorithm possesses a better trade off as we obtain the same relative error using only $10$ samples as uniform sampling does using $50+$ samples, resulting in at least a $\textbf{5x}$ improvement in sample complexity. For the text embedding  dataset, the performance gap between our importance sampling algorithm and uniform sampling grows even wider, as demonstrated by Figure \ref{fig:sample_complexity1} (b), leading to $\textbf{> 10x}$ improvement in sample complexity.

In terms of runtimes, we expect the brute force search to be much slower than either importance sampling and uniform sampling. Furthermore, our algorithm has the overhead of first estimating the values $\bD_a$ for $a \in A$ using an LSH family, which uniform sampling does not. However, this is compensated by the fact that our algorithm requires much fewer samples to get accurate estimates.

Indeed, Figure \ref{fig:runtime_1} (a) shows the average time of 100 Chamfer distance computations between randomly chosen pairs of point clouds in the ShapeNet dataset. We set the number of samples for uniform sampling and importance sampling (our algorithm) such that they both output estimates with (close to) $2\%$ relative error. Note that our runtime includes the time to build our LSH data structures. This means we used $100$ samples for importance sampling and $500$ for uniform.  The brute force KD Tree algorithm (which reports exact answers) is approximately 5x slower than our algorithm. At the same time, our algorithm is $50\%$ faster than uniform sampling. For the Federalist Papers dataset (Figure \ref{fig:runtime_1} (b)), our algorithm only required $20$ samples to get a $2\%$ relative error approximation, whereas uniform sampling required at least $450$ samples. As a result, our algorithm achieved \textbf{2x} speedup compared to uniform sampling.

\paragraph{Outliers}
We performed similar experiments as above. Figure \ref{fig:sample_complexity1} (c) shows the sample complexity vs accuracy trade off curves of our algorithm and uniform sampling. Uniform sampling has a very large error compared to our algorithm, as expected. While the relative error of our algorithm decreases smoothly as the sample size grows, uniform sampling has the same high relative error. In fact, the relative error will stay high until the outlier is sampled, which  typically requires $\Omega(n)$ samples.
 
\paragraph{Large Scale}
We consider two modifications to our algorithm to optimize the performance of $\CrudeNN$ on the two challenging datasets that we are using; namely, note that both datasets are standard for benchmarking billion-scale nearest neighbor search. First, in the $\CrudeNN$ algorithm, when computing $\bD_a$ for $a\in A$, we search through the hash buckets  $h_1(a),h_2(a),\dots$ containing $a$ in increasing order of $i$ (i.e., smallest scale first), and retrieve the first $W$ (window size) distinct points in $B$ from these buckets. Then, the whole process is repeated $k$ times, with $k$ independent LSH data structures, and $\bD_a$ is set to be the distance from $a$ to the closest among all $Wk$ retrieved points.
 
Note that previously, for our smaller datasets, we set $\bD_a$ to be the distance to the first point in $B$ colliding with $a$, and repeated the LSH data structure once, corresponding to $W=k=1$. In our figures, we refer to these parameter choices as $k \times W$ and test our algorithm across several choices.

For the DEEP and Turing datasets, Figures \ref{fig:sample_complexity1} (d) and \ref{fig:sample_complexity1} (e) show the sample complexity vs relative error trade-offs for the best parameter choice (both $64 \times 10^6$) compared to uniform sampling. Qualitatively, we observe the same behavior as before: importance sampling requires fewer samples to obtain the same accuracy as uniform sampling. Regarding the other parameter choices, we see that, as expected, if we decrease $k$ (the number of LSH data structures), or if we decrease $W$ (the window size), the quality of the approximations $\{\bD_a\}_{a \in A}$ decreases and importance sampling has worse sample complexity trade-offs. Nevertheless, for all parameter choices, we see that we obtain superior sample complexity trade-offs compared to uniform sampling, as shown in Figure \ref{fig:more_largescale}. A difference between these parameter choices are the runtimes required to construct the approximations $\{\bD_a\}_{a \in A}$. For example for the DEEP dataset, the naive brute force approach (which is also optimized using parallelization) took approximately $1.3\cdot 10^4$ seconds, whereas the most expensive parameter choice of $64 \times 10^6$ took approximately half the time at $6.4 \times 10^3$ and the cheapest parameter choice of $8 \times 10^5$ took $225$ seconds, leading to a \textbf{2x-50x} factor speedup. The runtime differences between brute force and our algorithm were qualitative similar for the Turing dataset. 

Similar to the small scale dataset, our method also outperforms uniform sampling in terms of runtime if we require they both output high quality approximations. If we measure the runtime to get a $1\%$ relative error, the $16\times 2 \cdot 10^5$ version of our algorithm for the DEEP dataset requires approximately $980$ samples with total runtime approximately $1785$ seconds, whereas uniform sampling requires $> 1750$ samples and runtime $>2200$ seconds, which is $>23\%$ slower. The gap in runtime increases if we desire approximations with even smaller relative error, as the overhead of obtaining the approximations $\{\bD_a\}_{a \in A}$ becomes increasingly overwhelmed by the time needed to compute the exact answer for our samples.

\paragraph{Additional Experimental Results}
We perform additional experiments to show the utility of our approximation algorithm for the Chamfer distance for downstream tasks.
For the ShapeNet dataset, we show we can efficiently recover the true exact nearest neighbor of a fixed point cloud $A$ in Chamfer distance among a large collect of different point clouds. In other words, it is beneficial for finding the `nearest neighboring point cloud'. Recall the ShapeNet dataset,  contains approximately $5 \cdot 10^4$ different point clouds. We consider the following simple (and standard) two step pipeline: (1) use our algorithm to compute an approximation of the Chamfer distance from $A$ to every other point cloud $B$ in our dataset. More specifically, compute an approximation to $\CH(A,B) + \CH(B,A)$ for all $B$ using $50$ samples and the same parameter configurations as the small scale experiments.  Then filter the dataset of points clouds and prune down to the top $k$ closest point cloud candidates according to our approximate distances. (2) Find the closest point cloud in the top $k$ candidates via exact computation. 

We measure the accuracy of this via the standard recall $@k$ measure, which computes the fraction of times the \emph{exact} nearest neighbor $B$ of $A$, averaged over multiple $A$'s, is within the top $k$ choices.  Figure \ref{fig:more_shapenet} (a) shows that the true exact nearest neighbor of $A$, that is the point cloud $B$ which minimizes $\CH(A,B) + \CH(B,A)$ among our collection of multiple point clouds, is within the top $30$ candidates $>98\%$, time (averaged over multiple different choices of $A$). This represents a more than $\textbf{1000x}$ reduction in the number of point clouds we do exact computation over compared to the naive brute force method, demonstrating the utility of our algorithm for downstream tasks.

\begin{figure}[!ht]
    \centering
    \subfloat[\centering ShapeNet NNS pipeline experiments]{{\includegraphics[width=5cm]{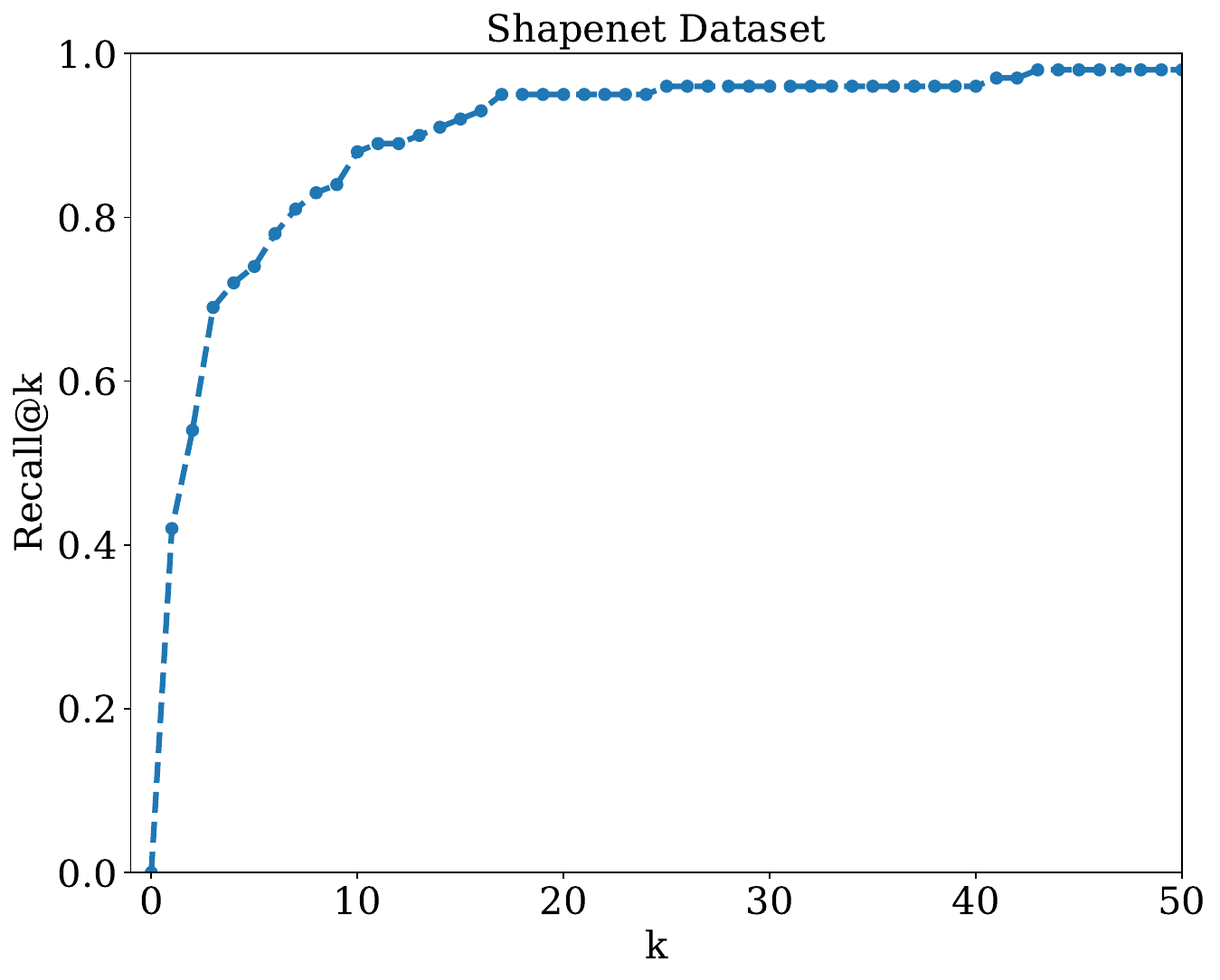} }}%
    \qquad
    \subfloat[\centering Quality of approximations $\bD_a$ vs the number of levels of LSH data structure]{{\includegraphics[width=5cm]{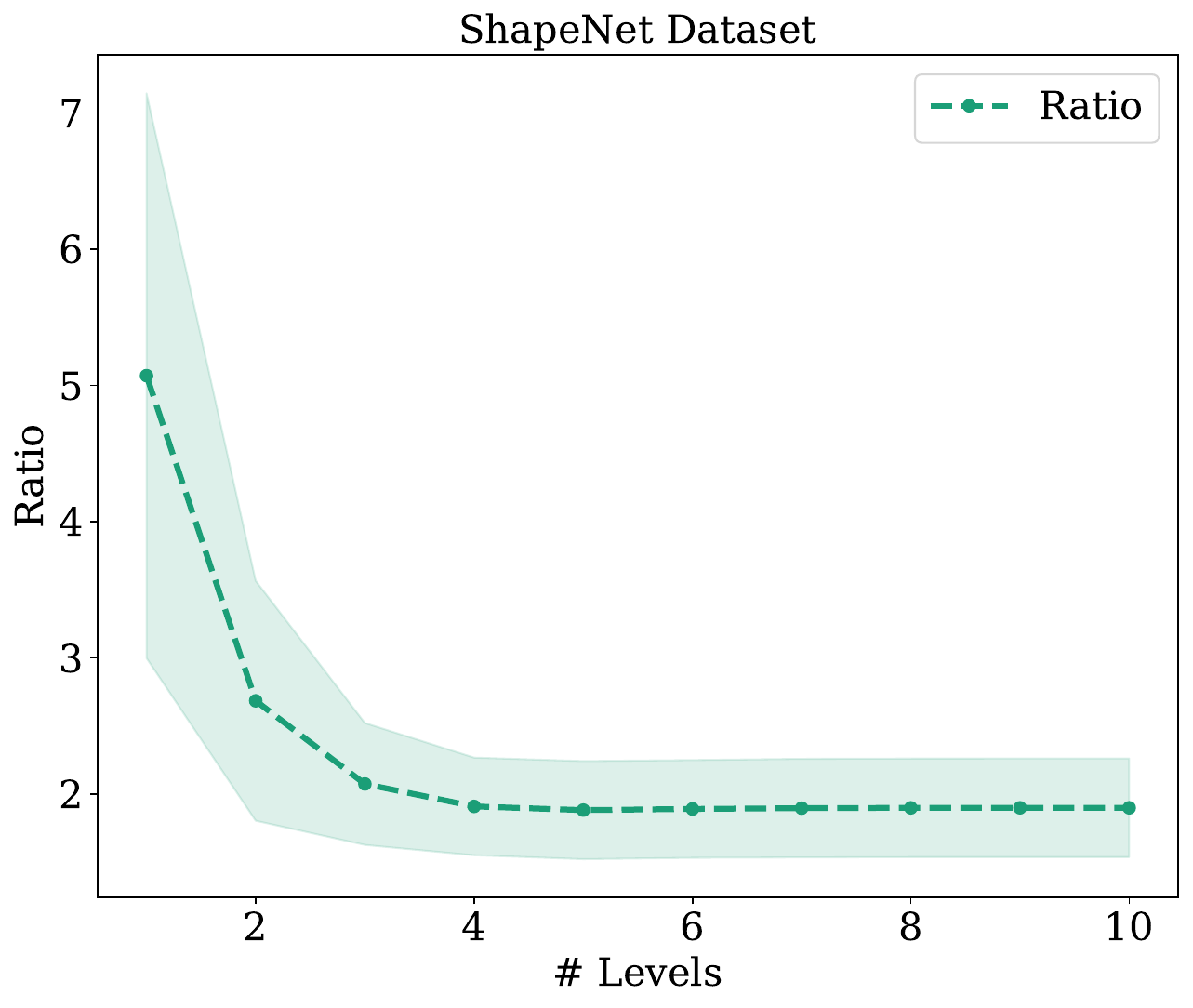} }}%
    \caption{Additional figures for the ShapeNet dataset.}%
    \label{fig:more_shapenet}%
\end{figure}

\section{Lower Bound for Reporting the Alignment}\label{sec:lowerbound}

We presented an algorithm that, in time $\bigO{nd \log(n)/\eps^2}$, produces a $(1+\eps)$-approximation to $\CH(A,B)$. It is natural to ask whether it is also possible to {\em report} a mapping $g:A \to B$ whose cost $\sum_{a \in A} \|a-g(a)\|_1$ is within a factor of $1+\eps$ from $\CH(A,B)$. (Our algorithm uses on random sampling and thusdoes not give such a mapping). This section shows that, under a popular complexity-theoretic conjecture called the {\em Hitting Set Conjecture}~\cite{williams2018some}, such an algorithm does not exists. For simplicity, we focus on the case when the underlying metric $d_X$ is induced by the Manhattan distance, i.e., $d_X(a,b)= \|a-b\|_1$. The argument is similar for the Euclidean distance, Euclidean distance squared, etc.  To state our result formally, we first define the Hitting Set (HS) problem. 

\begin{definition}[Hitting Set (HS) problem]
    The input to the problem consists of two sets of vectors $A, B \subseteq \{0,1\}^d$, and the goal is to determine whether there exists some $a \in A$ such that $a \cdot b \neq 0$ for every $b \in B$. If such an $a \in A$ exists, we say that $a$ {\em hits} $B$.

\end{definition}

It is easy to see that the Hitting Set problem can be solved in time $\bigO{n^2 d}$. The Hitting Set Conjecture~\cite{williams2018some} postulates that this running time is close to the optimal. Specifically:

\begin{conjecture}
\label{conj:hitting-set}
Suppose $d = \Theta(\log^2 n)$. Then for every constant $\delta > 0$, no randomized algorithm can solve the Hitting Set problem in $\bigO{n^{2-\delta}}$ time.
\end{conjecture}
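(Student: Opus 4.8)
The statement in question is a \emph{conjecture}, not a theorem, so there is no unconditional proof to propose: were a proof known, it would not be stated as Conjecture~\ref{conj:hitting-set}. What one can do is describe the route by which one would \emph{attempt} to establish it, namely a fine-grained reduction from a more standard hardness hypothesis such as SETH (the Strong Exponential Time Hypothesis). The plan would be: (i) start from an instance of CNF-SAT on $N$ variables with $M=\poly(N)$ clauses (using the Sparsification Lemma to assume $M = \bigO{N}$ if desired); (ii) split the variables into two halves and enumerate the $2^{N/2}$ partial assignments on each half, producing two vector sets $A,B$ of size $n = 2^{N/2}$; (iii) encode each partial assignment as a $0/1$ vector of dimension $d = \Theta(\log^2 n) = \Theta(N^2)$ so that the Hitting Set predicate ``$\exists a \in A$ with $a\cdot b \neq 0$ for all $b \in B$'' becomes logically equivalent to satisfiability of the original formula; (iv) conclude that a hypothetical $\bigO{n^{2-\delta}}$-time Hitting Set algorithm run at $d=\Theta(\log^2 n)$ would give a $2^{(1-\delta/2)N}\poly(N)$-time SAT algorithm, contradicting SETH.

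The hard part---and precisely the reason this remains a conjecture rather than a theorem---is step (iii), the encoding. The Orthogonal Vectors problem ``$\exists a\,\exists b:\ a\cdot b = 0$'' matches the $\exists\exists$ structure one obtains from splitting a SAT formula's variables and checking clause coverage, which is why the OV conjecture \emph{does} follow from SETH. The Hitting Set predicate, by contrast, has the alternating form ``$\exists a\,\forall b$'', a $\Sigma_2$-type quantifier alternation over the $n \times n$ grid of pairs, and there is no known gadget that collapses such an alternation into a single inner-product test in only polylogarithmic dimension. The distributed-PCP / algebraic-coding machinery that has been developed for related questions can handle some quantifier alternations, but only at the cost of either super-polylogarithmic dimension or a loss in the exponent, neither of which is compatible with the exact parameters $d = \Theta(\log^2 n)$ and exponent $2-\delta$ asked for here. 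So I do not expect this step to go through; it is the genuine obstruction.

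Given that, the realistic ``proposal'' is \emph{not} to prove Conjecture~\ref{conj:hitting-set} but to (a) check that it is consistent with the state of the art---the best known algorithm for Hitting Set at $d = \Theta(\log^2 n)$ is $n^{2-o(1)}$, governed by the same fast-matrix-multiplication-meets-polynomial-method barrier that applies to OV and APSP---and (b) record its standing: it was introduced and discussed as a standalone fine-grained hypothesis by Williams~\cite{williams2018some}, is widely believed on par with the OV and APSP conjectures, and has been used to derive numerous conditional lower bounds (graph diameter, eccentricities, and, in Section~\ref{sec:lowerbound} of this paper, Chamfer-distance alignment reporting). In short, the right move is to \emph{cite} Conjecture~\ref{conj:hitting-set} as a hypothesis and to make the subsequent hardness result a reduction \emph{from} it, rather than to attempt to discharge it.
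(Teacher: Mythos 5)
You are right: this statement is a conjecture, not a theorem, and the paper offers no proof of it — it simply states the Hitting Set Conjecture, cites Williams~\cite{williams2018some}, and uses it as the hypothesis for the conditional hardness result in Section~\ref{sec:lowerbound}. Your treatment (declining to prove it, explaining why a SETH reduction is not known to work because of the $\exists\,\forall$ quantifier alternation, and recommending it be cited as a standalone hypothesis) matches the paper's approach exactly.
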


Our result can be now phrased as follows.

\begin{theorem}[Hardness for reporting a mapping]
    Let $T(N,D,\eps)$ be the running time of an algorithm $ALG$ that, given sets of $A",B" \subset \{0,1\}^D$ of sizes at most $N$, reports a mapping $g:A" \to B"$ with cost  $(1+\eps)\textup{\CH(}A",B")$, for $D = \Theta(\log^2 N)$ and $\eps =\frac{\Theta(1)}{D}$. Assuming the Hitting Set Conjecture, we have that $T(N,D,\eps)$ is at least $\Omega(N^{2-\delta})$ for any constant $\delta>0$.
\end{theorem}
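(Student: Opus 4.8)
The plan is to reduce Hitting Set to the task of reporting an approximately optimal Chamfer mapping, using a standard orthogonality gadget together with an iterative ``peeling'' of the reported mapping's errors. Given an HS instance $A,B\subseteq\{0,1\}^d$ in the conjectured regime $d=\Theta(\log^2 n)$, I would first apply the usual SETH/OV-style gadget: replace each coordinate of the $A$- and $B$-vectors by a short $0/1$ block to obtain $\phi(a),\psi(b)\in\{0,1\}^{D}$ with $D=\Theta(d)$ such that (i) all $\phi(a)$ have a common weight and all $\psi(b)$ have a common weight, and (ii) $\|\phi(a)-\psi(b)\|_1=m$ if $a\cdot b=0$ and $\|\phi(a)-\psi(b)\|_1=m+2(a\cdot b)\ge m+2$ otherwise, for an appropriate $m=\Theta(d)$. (Concretely, encoding a coordinate by $0\mapsto(1,0,0),\,1\mapsto(0,1,0)$ on the $A$ side and $0\mapsto(1,1,0),\,1\mapsto(1,0,1)$ on the $B$ side yields inner product $d-a\cdot b$ and constant weights $d$ and $2d$, hence $\|\phi(a)-\psi(b)\|_1=d+2(a\cdot b)$; a small amount of extra padding handles parity.) Set $A''=\{\phi(a)\}_{a\in A}$, $B''=\{\psi(b)\}_{b\in B}$, so $N=n$, $D=\Theta(\log^2 N)$, and take $\eps=c_0/D$ for a suitably small constant $c_0$, matching the claimed parameter regime.

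The crucial structural fact is that $\min_{b\in B}\|\phi(a)-\psi(b)\|_1$ equals $m$ \emph{exactly} when $a$ is not a hitter (it has an orthogonal partner in $B$) and is $\ge m+2$ when $a$ hits $B$. Hence $\CH(A'',B'')=mn$ if no hitter exists and $\CH(A'',B'')\ge mn+2$ otherwise: the two values differ only additively by $O(1)$ on a base of $\Theta(mn)$, so separating them by \emph{estimation} would require $\eps<2/(mn)$ and thus $\widetilde\Omega(n^3)$ time — which is consistent with Theorem~\ref{thm:estimating-chamfer-nearly-linear} and is exactly why this hardness does not collide with the upper bound. To use a fast \emph{reporting} algorithm $ALG$, run it on $(A'',B'')$ to obtain $g$ with $\mathrm{cost}(g)\le(1+\eps)\CH(A'',B'')$, and let $A^{(1)}=\{a\in A:\|\phi(a)-g(\phi(a))\|_1\ge m+2\}$. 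Every non-suspicious $a$ is then certified to be a non-hitter, since $g(\phi(a))$ decodes (in $O(D)$ time) to a $b\in B$ orthogonal to $a$; and every hitter lies in $A^{(1)}$ because it has no distance-$m$ neighbor. Since each $a\in A^{(1)}$ contributes at least $2$ to $\sum_a(\|\phi(a)-g(\phi(a))\|_1-m)=\mathrm{cost}(g)-mn\le\eps\,\CH(A'',B'')$, and $\CH(A'',B'')=mn$ whenever $B$ is not hit, we get $|A^{(1)}|\le \tfrac12\,\eps mn\le (c_0/6)\,n$ in the NO case.

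The extraction then iterates this peeling step. Recursing $ALG$ on $(\{\phi(a):a\in A^{(i)}\},B'')$ produces $A^{(i+1)}\subseteq A^{(i)}$, and because every $a\in A^{(i)}$ retains its hitter/non-hitter status against the unchanged $B$, the same counting gives $|A^{(i+1)}|\le (c_0/6)\,|A^{(i)}|$ whenever $A^{(i)}$ contains no hitter, while $A^{(i)}$ always contains all original hitters. Thus, if HS is a NO instance, $|A^{(i)}|\le(c_0/6)^i n$ drops below $1$ after $t=O(\log n)$ rounds and the suspicious set becomes empty; if HS is a YES instance, $|A^{(i)}|\ge1$ for every $i$. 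Outputting ``hitter exists'' iff $A^{(t)}\neq\emptyset$ therefore decides HS using $t=O(\log n)$ calls to $ALG$ on instances with parameters $(\le n,\ \Theta(\log^2 N),\ \Theta(1)/D)$, plus $\widetilde O(n)$ overhead and a union bound over the $O(\log n)$ randomized calls. If $T(N,D,\eps)=O(N^{2-\delta})$, this gives an $O(n^{2-\delta}\log n)=O(n^{2-\delta/2})$ algorithm for HS, contradicting Conjecture~\ref{conj:hitting-set}.

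The routine part is the gadget itself (standard bookkeeping to normalize weights while staying inside $\{0,1\}^{\Theta(\log^2 n)}$). I expect the main obstacle to be the argument that \emph{approximate} reporting already suffices, even though the reported mapping is optimal only on average and the YES/NO gap in the objective value is far too small to be seen by fast estimation: one must show that the set of mis-mapped points contracts geometrically under re-invocation of the reporter. This is precisely what pins down the parameter regime $\eps=\Theta(1)/D$ — the contraction factor is $\Theta(\eps D)$, so it must be a sufficiently small constant — and it is the step whose details (including consistency of the iterated reductions and the choice of $t$) need the most care.
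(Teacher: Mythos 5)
Your proof is correct, and it reaches the paper's conclusion by a genuinely different (and somewhat leaner) route. The paper's reduction augments $B''$ with a sink vector $v$ at distance exactly $4d+4$ from every mapped $A$-point, and prepends a uniform sampling phase that either finds a hitter outright or certifies that the number of hitters $h$ is at most $\eps n_A$. The point of both devices is to make the Chamfer value of the constructed instance controllable even when $h > 0$: with $v$, $\CH(A'',B'') = n_A(4d+2) + 2h$, and with few hitters this is close to $n_A(4d+2)$, so a $(1+\eps)$-mapping must certify at least $n_A/2$ non-hitters in \emph{every} round. This gives an $|A|\to|A|/2$ contraction unconditionally, and termination is by either finding a hitter via sampling (once $|A|$ shrinks below the sample budget) or exhausting $A$. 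Your version drops both $v$ and the sampling phase. You only claim contraction in the NO case, where $\CH$ equals $m|A^{(i)}|$ exactly, so the slack $\mathrm{cost}(g) - m|A^{(i)}| \le \eps\,m|A^{(i)}|$ immediately bounds the suspicious set; in the YES case you do not attempt to shrink $A^{(i)}$ at all, relying only on the invariant that hitters have no distance-$m$ neighbor and therefore remain suspicious forever. Running a fixed $t = O(\log n)$ rounds and checking whether $A^{(t)}$ is empty then correctly decides HS, and the running time $O(\log n)\cdot T(N,D,\eps)$ plus $\widetilde{O}(n)$ overhead suffices. This buys simplicity (no sink, no sampling, no case split based on the estimated hitter count) at the cost of a slightly less uniform invariant: contraction is only one-sided, so you cannot bound how fast the YES instances shrink, but as you correctly note this does not matter for either correctness or the $O(n^{2-\delta}\log n)$ bound. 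Your parameter calibration ($\eps = c_0/D$ with $m = \Theta(D)$ so that the contraction factor $\eps m/2$ is a small constant) matches the paper's regime, and your orthogonality gadget is a valid alternative to the paper's Claim on the encoding $f$. One small caveat worth noting explicitly: since each round's guarantee relies on $ALG$ succeeding, a union bound over the $O(\log n)$ calls is needed (you mention this), and if $ALG$ has only constant success probability one should amplify it by independent repetitions before invoking your reduction.
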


\section{Conclusion}
We present an efficient approximation algorithm for estimating the Chamfer distance up to a $1+\varepsilon$ factor in time $\bigO{nd \log(n)/\varepsilon^2}$. The result is complemented with a conditional lower bound which shows that reporting a Chamfer distance mapping of similar quality requires nearly quadratic time. Our algorithm is easy to implement in practice and compares favorably to brute force computation and uniform sampling. We envision our main tools of obtaining fast estimates of coarse nearest neighbor distances combined with importance sampling can have additional applications in the analysis of high-dimensional, large scale data.

\bibliographystyle{alpha}
\bibliography{main.bib}
\newpage
\appendix

\section{Deferred Analysis from Section~\ref{sec:algorithm}}\label{app:upper-bound}
\begin{proof}[Proof of Lemma~\ref{lem:low-variance-estimate}]
The proof follows from a standard analysis of importance sampling. The fact that our estimator $\boldeta$ is unbiased holds from the definition of $\boldeta_{\ell}$, since we are re-weighting samples according to the probability with which they are sampled in $\calD$ (in particular, the estimator is unbiased for all distributions $\calD$ where $\bD_a > 0$ for all $a \in A$). The bound on the variance is then a simple calculation:
\begin{align*}
\Varx\left[ \boldeta \right] &\leq \frac{1}{T} \cdot \left(\left[ \sum_{a \in A} \left(\frac{\bD}{\bD_a} \right) \min_{b \in B} \|a - b\|_2^2\right] - \CH(A,B)^2\right) \\
    &\leq \frac{1}{T} \cdot \left[ \sum_{a \in A} \min_{b \in B} \|a-b\|_2 \cdot \bD\right] - \frac{\CH(A,B)^2}{T}\\
	&\leq \frac{1}{T} \cdot \CH(A,B)^2 \left( \frac{\bD}{\CH(A,B)} - 1\right) .
\end{align*}
The final probability bound follows from Chebyshev's inequality.
\end{proof}

\paragraph{Locality Sensitive Hashing at every scale.}
We now discuss how to find such partitions. For the $\ell_1$ distance, each partition $i$ is formed by imposing a (randomly shifted) grid of side length $2^i$ on the dataset. Note that while the grid partitions the entire space $\R^d$ into infinitely many components, we can efficiently enumerate over the \emph{non empty} components which actually contain points in our dataset. To this end, we introduce the following definition:


\begin{definition}[Hashing at every scale]
\label{def:lsh-all-scale}
There exists a fixed constant $c_1 > 0$ and a parameterized family $\calH(r)$ of functions from $X$ to some universe $U$ such that for all $r > 0$, and for every $x, y \in X$ 
\begin{enumerate}
    \item Close points collide frequently: \begin{equation*}
        \Prx_{\bh \sim \calH(r)}\left[ \bh(x) \neq \bh(y) \right] \leq \frac{\|x -y \|_1}{r}, 
    \end{equation*}
    \item Far points collide infrequently: \begin{equation*}
\Prx_{\bh \sim \calH(r)} \left[ \bh(x) = \bh(y) \right] \leq \exp\left(- c_1 \cdot \frac{ \| x - y\|_1 }{r}\right).\end{equation*}
\end{enumerate}
\end{definition}

We are now ready to make this approach concrete via the following lemma:


\begin{lemma}[Oversampling with bounded Aspect Ratio]
\label{lem:log-n-oversampling-distribution}
Let $(X, d_X)$ be a metric space with a locality-sensitive hash family at every scale (see Definition~\ref{def:lsh-all-scale}). Consider two subsets $A, B \subset X$ of size at most $n$ and $\eps \in (0, 1)$ satisfying
\[ 1 \leq \min_{\substack{a \in A, b \in B \\ a \neq b}} d_X(a,b) \leq \max_{\substack{a \in A, b \in B}} d_X(a,b) \leq \poly(n/\eps). \]
Algorithm~\ref{fig:crude-nn}, $\emph{\CrudeNN}(A,B)$, outputs a list of (random) positive numbers $\{ \bD_a \}_{a \in A}$ which satisfy the following two guarantees:
\begin{itemize}
\item With probability $1$, every $a \in A$ satisfies $\bD_a \geq \min_{b \in B} d_X(a, b)$. 
\item For every $a \in A$,  $\Ex[\bD_a] \leq \bigO{\log n} \cdot \min_{b\in B}d_X(a,b)$.
\end{itemize}
Further, Algorithm~\ref{fig:crude-nn}, runs in time $\bigO{dn\log(n/\eps)}$ time, assuming that each function used in the algorithm can be evaluated in $\bigO{d}$ time.
\end{lemma}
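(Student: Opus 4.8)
\textbf{Proof proposal for Lemma~\ref{lem:log-n-oversampling-distribution}.}

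The plan is to analyze the output $\{\bD_a\}_{a \in A}$ of $\CrudeNN(A,B)$ one point $a \in A$ at a time, fixing $a$ and its true nearest neighbor $b^\star \in B$ with $d^\star := \min_{b \in B} d_X(a,b) = \|a - b^\star\|_1$. The first (deterministic) claim is immediate from the construction: the algorithm only ever sets $\bD_a = \|a - b\|_1$ for some actual $b \in B$, so $\bD_a \ge d^\star$ always; and since distances are at least $1$ and at most $\poly(n/\eps)$, with $r_L = 2^L$ for $L = \bigO{\log(n/\eps)}$ exceeding the diameter, $a$ and $b^\star$ collide at scale $L$ with probability close to $1$ (by property~1 of Definition~\ref{def:lsh-all-scale}, the non-collision probability is $\le d^\star/r_L \ll 1$), so $\bD_a$ is well-defined. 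One should handle the low-probability event that even scale $L$ fails; a clean fix is to note property~1 forces collision probability $\ge 1 - d^\star/2^L$, and choosing the constant in $L$ large enough makes this event negligible, or one simply sets $\bD_a$ to the diameter in that case, which only adds a $\poly(n/\eps)\cdot\Pr[\text{fail}]$ term that is absorbed.

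For the expectation bound, the key step is to write $\bD_a$ in terms of the first scale $i^\star$ at which $a$ collides with \emph{some} point of $B$. Let $i_0$ be the smallest index with $2^{i_0} \ge d^\star$, so $2^{i_0} = \Theta(d^\star)$. I would split into two contributions. First, conditioned on reaching scale $i \ge i_0$ without having collided earlier, the value $\bD_a$ is at most the distance to whatever point is found, but more usefully: once we are at a scale $i$, if $a$ collides with $b^\star$ there, then $\bD_a \le \|a-b^\star\|_1 = d^\star$ (the algorithm takes the point found at the \emph{first} successful scale, and at scale $i$ it will report \emph{a} colliding point, whose distance we must bound). The subtlety the excerpt flags is that at the first successful scale $i^\star$ the reported point need not be $b^\star$ — it could be some other $b$ that happens to hash into $a$'s bucket at a small scale. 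So the real work is bounding $\E[\|a - b_{i^\star}\|_1]$ where $b_{i^\star}$ is the point actually reported. I would bound this by: for each scale $i < i_0$, the probability that \emph{any} far point $b$ with $\|a-b\|_1 = \Omega(\log n)\cdot 2^i$ collides with $a$ at scale $i$ is, by property~2 (far points collide infrequently) and a union bound over $\le n$ points in $B$, at most $n \cdot \exp(-c_1 \cdot \Omega(\log n)) \le 1/\poly(n)$ — so with high probability, any point reported at scale $i$ has distance $\bigO{2^i \log n}$. Summing the geometric contribution $\sum_{i \le i_0} 2^i \cdot \Pr[\text{first success at } i] \cdot \bigO{\log n}$, together with the fact that success at scale $\ge i_0$ contributes $\bigO{\log n}\cdot d^\star$ (since by scale $i_0 + \bigO{\log\log n}$ or so the true neighbor collides with constant probability by property~1), gives $\E[\bD_a] = \bigO{\log n} \cdot d^\star$.

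The main obstacle I anticipate is making the "point reported at scale $i$ has distance $\bigO{2^i \log n}$ with high probability" argument interact cleanly with the expectation: a naive union bound kills a $\poly(n)$-fraction of the probability mass, but on that bad event $\bD_a$ could be as large as the diameter $\poly(n/\eps)$, so one must check the product is still $o(d^\star)$ or absorb it — doable since $d^\star \ge 1$ and we can push the union-bound slack to $1/\poly(n/\eps)$ by taking the implied constant in the "$\Omega(\log n)$" threshold large. A cleaner alternative, which I would actually pursue, is to bound $\E[\bD_a]$ directly via $\E[\bD_a] = \sum_{i=0}^{L}\E[\bD_a \cdot \mathbf{1}[i^\star = i]] \le \sum_i (\text{bound on reported distance at scale } i) \cdot \Pr[\text{no collision at scales} < i]$, and to bound $\Pr[\text{no collision at scales} < i]$ for $i > i_0$ by the probability $a$ and $b^\star$ fail to collide at \emph{all} of scales $i_0, \dots, i-1$, which by property~1 and independence across scales is $\le \prod_{j=i_0}^{i-1}(d^\star/2^j) \le (1/2)^{i - i_0} \cdot (\text{something} \le 1)$ — wait, property 1 only gives a per-scale bound of $d^\star/2^j \le 1$ for $j \ge i_0$, so this product telescopes geometrically and the tail $\sum_{i > i_0} 2^i \cdot 2^{-(i-i_0)}$ is $\bigO{2^{i_0}} = \bigO{d^\star}$, with the per-scale-$i$ reported-distance bound $\bigO{2^i \log n}$ supplying the remaining $\log n$ factor. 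Finally the running time: hashing each of the $\le 2n$ points at each of $L = \bigO{\log(n/\eps)}$ scales costs $\bigO{d}$ per evaluation, and the bucket lookups are charged against these, giving $\bigO{dn\log(n/\eps)}$ total.
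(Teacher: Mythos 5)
Your approach is essentially the paper's: analyze one $a \in A$ at a time, split on the first scale $i^\star$ at which $a$ collides with some point of $B$, use property~2 with a union bound over $B$ to control the distance of whatever point is reported at a given scale, and use property~1 to control the probability of reaching a late scale without having met the true neighbor. The paper organizes the same estimates slightly differently (via $\E[\bD_a] \le c\gamma_a + \int_{c\gamma_a}^\infty \Pr[\bD_a \ge \gamma]\,d\gamma$ and two bad events $\bE_1,\bE_2$) but the underlying inequalities are identical to the ones you write down.

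The one place you go astray is the telescoping step. You write the failure-to-collide product correctly as $\prod_{j=i_0}^{i-1}(d^\star/2^j)$, but then weaken it to ``$\le (1/2)^{i-i_0}$'' and even remark that property~1 ``only gives a per-scale bound $\le 1$.'' That is not true: since $d^\star \le 2^{i_0}$, each factor satisfies $d^\star/2^j \le 2^{-(j-i_0)}$, so the product decays \emph{super}-geometrically, like $2^{-\Omega((i-i_0)^2)}$. This matters: with only the geometric decay you state, $\sum_{i>i_0} 2^i \cdot 2^{-(i-i_0)}$ does not converge (each term equals $2^{i_0}$, so the sum is $\Theta(L)\cdot 2^{i_0}$), and after multiplying by the $\bigO{\log n}$ reported-distance factor you would get $\bigO{\log^2 n}\cdot d^\star$, not the claimed $\bigO{\log n}\cdot d^\star$. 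With the super-geometric decay, $\sum_{k\ge 1} 2^k \cdot 2^{-\Omega(k^2)} = \bigO{1}$, the tail is $\bigO{2^{i_0}} = \bigO{d^\star}$, and the single $\log n$ factor then comes cleanly from the per-scale reported-distance bound, as in the paper's equation~(\ref{eq:e-2}). Tighten this step and your derivation matches the paper's. Your remark on absorbing the $1/\poly(n/\eps)$ union-bound slack against the $\poly(n/\eps)$ diameter (using $d^\star \ge 1$) is the right way to handle the low-probability branch, and your runtime accounting is correct.
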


Finally, we show that it always suffices to assume bounded aspect ratio:

\begin{lemma}[Reduction to bounded Aspect Ratio]
\label{lem:polynomial-aspect-ratio-suffices}
Given an instance $A, B\subset \mathbb{R}^d$ such that $|A|, |B| \le n$, and $0<\eps<1$ there exists an algorithm that runs in time $\bigO{nd\log(n)/\eps^2}$ and outputs a partition $A_1, A_2, \ldots A_T $ of $A$ and $B_1, B_2, \ldots B_T$  of $B$ such that $T= \bigO{n}$
and for each $t \in [T]$, 
\begin{equation*}
    1 \leq \min_{\substack{a \in A_t , b \in B_t \\ a \neq b}} \norm{a-b}_1 \leq \max_{\substack{a \in A_t, b \in B_t}}  \norm{a-b}_1 \leq \poly(n/\eps).
\end{equation*}
Further, 
\begin{equation*}
    \Paren{1-\eps}\textup{\CH(}A,B)   \leq \sum_{t \in [T]} \textup{\CH(}A_t,B_t) \leq \Paren{1+\eps}\textup{\CH(}A,B).
\end{equation*}
\end{lemma}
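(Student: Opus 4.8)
The plan is to reduce to bounded aspect ratio by \emph{bucketing the points of $A$ by the scale of their nearest‑neighbor distance in $B$}, cutting each bucket into geometrically well‑separated pieces, discarding the $O(\eps)$‑fraction of the Chamfer mass carried by extreme scales, and rescaling each piece so that its smallest $A$–$B$ distance is exactly $1$ (carrying the per‑piece scale factors along, so that the Chamfer values of the pieces still sum to $\approx\CH(A,B)$; both the rest of the construction and the downstream $\ChamferEstimate$ are scale invariant, so this is harmless). The first step is to set up the scales: detect the $a\in A$ that coincide with some $b\in B$ and output each as a singleton piece $(\{a\},\{b\})$ (these contribute $0$ and are vacuously bounded). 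For the rest, run an $\CrudeNN$‑style sweep as in Lemma~\ref{lem:intro-oversampling}, but over a window of only $\bigO{\log(n/\eps)}$ geometric scales located around a $\poly(n)$‑factor estimate of the top scale $\max_a \mathrm{NN}(a)$, where $\mathrm{NN}(a):=\min_{b\in B}\|a-b\|_1$; this produces, for each $a$, a colliding $\bb_a\in B$ with $\mathrm{NN}(a)\le\|a-\bb_a\|_1=:\bD_a$ and $\Ex[\bD_a]\le\bigO{\log n}\cdot\mathrm{NN}(a)$, in time $\bigO{nd\log(n/\eps)}$. Since $\max_a\mathrm{NN}(a)\le\CH(A,B)\le n\max_a\mathrm{NN}(a)$, any $a$ whose $\bD_a$ falls below $\eps\cdot\max_a\mathrm{NN}(a)/n^{c}$ contributes at most an $\eps/n$ fraction of $\CH(A,B)$; output each such ``light'' $a$ as a singleton piece $(\{a\},\{\bb_a\})$ — collectively these carry at most $\eps\,\CH(A,B)$ and, after rescaling, are trivially bounded.

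For the surviving points, bucket by scale: for each relevant $i$ let $A^{(i)}=\{a:2^i\le\bD_a<2^{i+1}\}$, so the surviving $a$'s are partitioned and $\sum_i|A^{(i)}|\le n$. Impose on $\R^d$ a small constant number $k$ of independent, uniformly shifted axis‑parallel grids of side $s_i:=\Theta\Paren{d\cdot n^{10}\cdot 2^i/\eps}$. For each nonempty cell $C$ (in some copy) meeting $A^{(i)}$, form a piece with $A$‑side $A^{(i)}\cap C$ and $B$‑side equal to the points of $B$ lying in the slightly enlarged cell of some copy that were not already claimed by a piece of \emph{smaller} scale (process scales in increasing order); dump the never‑claimed points of $B$ into a final piece with empty $A$‑side (vacuous, contributes $0$). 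Then $T=\bigO{n}$, since each surviving $a$ lies in a unique bucket and each cell yields one piece. For the guarantees: (i) \emph{aspect ratio} — inside a scale‑$i$ piece every $A$–$B$ distance is at least $\min_{a\in A^{(i)}}\mathrm{NN}(a)\ge 2^i/\poly(n)$ (using $\bD_a\ge\mathrm{NN}(a)$ and $\bD_a<2^{i+1}$) and at most the enlarged‑cell $\ell_1$‑diameter $\bigO{d\,s_i}=\poly(n/\eps)\cdot 2^i$, so after rescaling by the reciprocal of the piece minimum the bounds become $1$ and $\poly(n/\eps)$; (ii) \emph{Chamfer value} — for a surviving $a\in A^{(i)}$ its true nearest point $b$ has $\|a-b\|_1\le 2^{i+1}$, so with probability $\ge1-(2^{i+1}/s_i)^k$ over the shifts $b$ lies in $a$'s cell in \emph{every} copy, hence is available to $a$'s own piece unless claimed by a smaller‑scale piece, and a charging argument (a smaller‑scale piece only claims $B$‑points within $\poly(n)\cdot2^{i'}\ll s_i$ of its own $A$‑points) bounds the total expected inflation of $\sum_t\CH(A_t,B_t)$ from both these ``escape'' events and the claiming by $\eps\,\CH(A,B)$; together with Step~0 this gives the two‑sided $(1\pm\eps)$ bound. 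The running time is dominated by the sweep and the grid hashing, $\bigO{nd\log(n/\eps)}$.

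The crux is the combination in the second step: simultaneously keeping the $B_t$'s a \emph{genuine} partition, ensuring every ``heavy'' $a$ retains an (almost) nearest neighbor inside its own piece, and bounding the resulting over‑estimate by an $\eps$‑fraction \emph{in expectation}. The several independent random grid shifts (to drive the escape probability polynomially small) and the large $\poly(n/\eps)$ gap between $s_i$ and $2^i$ are precisely what make the charging go through; getting the claiming rule right so that no $a$ is stripped of all its good neighbors by a coincidentally nearby smaller‑scale piece is the part that requires the most care, and is where I expect the real work to lie. A secondary obstacle is the bootstrapping in the first step — running $\CrudeNN$ with no a priori aspect‑ratio bound — which is handled by first estimating $\max_a\mathrm{NN}(a)$ up to a $\poly(n)$ factor (e.g.\ via an LSH sweep) and noting that scales far below $\eps\cdot\max_a\mathrm{NN}(a)/\poly(n)$ carry only a negligible share of $\CH(A,B)$.
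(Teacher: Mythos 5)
Your proposal takes a genuinely different route from the paper's, and there is a concrete gap in its central step.

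The paper's proof is considerably simpler. It first computes a $\poly(n)$-factor estimate $\boldeta$ of $\CH(A,B)$ via a single $p$-stable projection to one dimension (Cauchy for $\ell_1$, Gaussian for $\ell_2$) followed by one-dimensional nearest-neighbor search via a binary search tree, in $\bigO{nd + n\log n}$ time. It then samples \emph{one} LSH function $\bh \sim \calH(cn\boldeta)$ at the single enormous scale $r = cn\boldeta$ and partitions $A\cup B$ by hash value. Because $r$ exceeds $\CH(A,B)$ by a factor $\Omega(n)$, the ``close points collide'' property of Definition~\ref{def:lsh-all-scale}, summed over $a\in A$, gives $\Prx[\exists a : \bh(a)\neq\bh(\mathrm{NN}(a))] \leq \CH(A,B)/(cn\boldeta) \leq 1/n$; so with high probability every $a$ lands in the same part as its true nearest neighbor and $\sum_t \CH(A_t,B_t)$ equals $\CH(A,B)$ exactly. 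The ``far points collide infrequently'' property then union-bounds over $n^2$ pairs to cap each part's diameter at $\poly(n/\eps)\cdot\boldeta$. The minimum-distance side is handled not by rescaling but by appending $\bigO{\log n}$ coordinates holding a random vector of norm $\eps\boldeta/(cn)$, which lifts every pairwise distance above $\eps\boldeta/(cn)$ while perturbing $\CH$ by only $\bigO{\eps}\cdot\CH(A,B)$. Because a single hash is used, the partition of $B$ is free: it is just the hash partition, and each $a$'s nearest neighbor is in the same part by construction.

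Your construction, by contrast, partitions $A$ by the \emph{scale} of $\bD_a$ and imposes a separate grid system per scale, so the per-scale cell structures must be reconciled into a single genuine partition of $B$ via a claiming rule. This is exactly where the argument breaks. After a smaller-scale piece claims a $B$-point, you have not shown that a larger-scale $a$ whose cell was thereby drained of all its nearby $B$-points retains any usable neighbor in its own $B_t$ --- in the worst case $B_t$ can be empty, making $\CH(A_t,B_t)$ undefined. The event ``the true nearest neighbor $b$ lies in $a$'s cell in every copy'' does not imply $b\in B_t$, and the charging observation that smaller-scale pieces only claim $B$-points within $\poly(n)\cdot 2^{i'}$ of their own $A$-points constrains where claimed points go but says nothing about what remains available to $a$. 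Some additional mechanism (reassigning such $a$'s to the claiming piece, slack copies of $B$-points, or relaxing the partition requirement on $B$) would be needed, and none is given; you yourself flag this as ``where the real work lies.'' The paper's single-scale hash eliminates the cross-scale interaction entirely, which is the key simplification you missed. Your crude $\poly(n)$ estimate of the top scale, the idea of discarding a negligible fraction of $\CH$, and the rescaling (in place of the paper's noise injection) are all reasonable, but the multi-scale decomposition with claiming is both substantially more complicated than necessary and, as written, unsound.
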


We defer the proofs of Lemma~\ref{lem:log-n-oversampling-distribution} and Lemma~\ref{lem:polynomial-aspect-ratio-suffices}
 to sub-sections~\ref{subsec:analysis-crude-ann} and~\ref{subsec:reduction-to-bounded-aspect-ratio} respectively. 
 We are now ready to complete the proof of Theorem~\ref{thm:estimating-chamfer-nearly-linear}.

\begin{proof}[Proof of Theorem~\ref{thm:estimating-chamfer-nearly-linear}]
Observe, by Lemma~\ref{lem:polynomial-aspect-ratio-suffices}, we can partition the input into pairs $(A_t, B_t)_{t\in[T]}$ such that each pair has aspect ratio at most $\poly(n/\eps)$ and the $\CH(A,B)$ is well-approximated by the direct sum of $\CH(A_t, B_t)$. Next, repeating the construction from Lemma~\ref{lem:log-n-oversampling-distribution}, and applying Markov's inequality, we have list $\{\bD_a\}_{a\in A}$ such that with probability at least $99/100$, for all $a\in A$,  $\bD_a \geq \min_{b\in B} \|a -b \|_1$ and $\bD = \sum_{a\in A} \bD_a \leq \bigO{\log(n)} \CH\Paren{A,B}$. Invoking Lemma~\ref{lem:low-variance-estimate} with the aforementioned parameters, and $T = \bigO{ \log(n)/\eps^2 }$ suffices to obtain an estimator $\eta$ which is a $(1\pm \eps)$ relative-error approximation to $\CH(A,B)$. Since we require computing the exact nearest neighbor for at most $\bigO{\log(n)/\eps^2}$ points, the running time is dominated by $\bigO{nd\log(n)/\eps^2}$, which completes the proof.
\end{proof}

\subsection{Analysis for $\CrudeNN$}
\label{subsec:analysis-crude-ann}

In this subsection, we focus analyze the $\CrudeNN$ algorithm and provide a proof for Lemma~\ref{lem:log-n-oversampling-distribution}. 
A construction of hash family satisfying Definition~\ref{def:lsh-all-scale} is given in Section~\ref{ss:lsh}. Each function from the family can be evaluated in $\bigO{d}$ time per point. We are now ready to prove Lemma~\ref{lem:log-n-oversampling-distribution}.



\begin{proof}[Proof of Lemma~\ref{lem:log-n-oversampling-distribution}] 
We note that the first item is trivially true, since $\CrudeNN(A,B)$ always sets $\bD_a$ to be some distance between $a$ and a point in $B$.  Thus, this distance can only be larger than the true minimum distance. The more challenging aspect is obtaining an upper bound on the expected value of $\bD_a$. Consider a fixed setting of $a \in A$, and the following setting of parameters:
\[ b = \argmin_{b' \in B} d_X(a, b') \qquad \gamma_a = d_X(a,b) \qquad i_0 = \lceil \log_2 \gamma_a \rceil,\]
and notice that since $\gamma_a$ is between $1$ and $\poly(n/\eps)$, we have $i_0$ is at least $0$ and at most $L = \bigO{\log (n/\eps)}$. We will upper bound the expectation of $\bD_a$ by considering a parameter $c > 1$ (which will later be set to $\bigO{\log n})$), and integrating over the probability that $\bD_a$ is at least $\gamma$, for all $\gamma \geq c \cdot \gamma_a$:
\begin{equation}
\label{eq:exp-expression}
\begin{split}
\Ex\left[ \bD_a \right] 
& \leq c \cdot \gamma_a + \int_{c\gamma_a}^{\infty} \Prx\left[ \bD_a \geq \gamma\right] d\gamma.
\end{split}
\end{equation}
We now show that for any $a \in A$, the probability that $\bD_a$ is larger than $\gamma$ can be appropriately bounded. Consider the following two bad events.
\begin{itemize}
\item $\bE_1(\gamma)$: This event occurs when there exists a point $b' \in B$ at distance at least $\gamma$ from $a$ and there exists an index $i \leq i_0$ for which $\bh_i(a) = \bh_i(b')$. 
\item $\bE_2(\gamma)$: This event occurs when there exists an index $i > i_0$ such that:
\begin{itemize}
\item For every $i' \in \{ i_0, \dots, i-1\}$, we have $\bh_{i'}(a) \neq \bh_{i'}(b)$ for all $b \in B$.
\item There exists $b' \in B$ at distance at least $\gamma$ from $a$ where $\bh_{i}(a) = \bh_{i}(b')$. 
\end{itemize}
\end{itemize}
We note that whenever $\CrudeNN(A,B)$ set $\bD_a$ larger than $\gamma$, one of the two events, $\bE_1(\gamma)$ or $\bE_2(\gamma)$, must have been triggered. 
To see why, suppose $\CrudeNN(A,B)$ set $\bD_a$ to be larger than $\gamma$ because a point $b' \in B$ with $d_X(a,b') \geq \gamma$ happened to have $\bh_i(a) = \bh_i(b')$, for an index $i \in \{0, \dots, L\}$, and that the index $i$ was the first case where it happened. If $i \leq i_0$, this is event $\bE_1(\gamma)$.
If $i > i_0$, we claim event $\bE_2(\gamma)$ occurred: in addition to $\bh_i(a) = \bh_i(b')$, it must have been the case that, for all $i' \in \{i_0, \dots, i - 1\}$, $\bh_{i'}(a) \neq \bh_{i'}(b)$ (otherwise, $i$ would not be the first index). We will upper bound the probability that either event $\bE_1(\gamma)$ or $\bE_2(\gamma)$ occurs. We make use of the tail bounds as stated in Definition \ref{def:lsh-all-scale}. The upper bound for the probability that $\bE_1(\gamma)$ is simple, since it suffices to union bound over at most $n$ points at distance larger than $\gamma$, using the fact that $r_{i_0} = 2^{i_0}$ is at most $2\cdot \gamma_a$:
\begin{align}
\Prx\left[ \bE_1(\gamma) \right] \leq n \cdot \exp\left( - c_1 \cdot \frac{\gamma}{2\gamma_a} \right). \label{eq:e-1}
\end{align}
We will upper bound the probability that event $\bE_2(\gamma)$ a bit more carefully. We will use the fact that for all $i$, the parameter $r_{i}$ is always between $2^{i - i_0} \gamma_a$ and $2^{i-i_0 + 1} \gamma_a$.
\begin{align}
\Prx\left[ \bE_2(\gamma) \right] &\leq \sum_{i > i_0} \left(\prod_{i'=i_0}^{i-1} \frac{\gamma_a}{r_{i'}} \right) \cdot \max\left\{ n \cdot \exp\left( - c_1 \cdot \frac{\gamma}{r_i}\right), 1\right\} \nonumber \\
				&\leq \sum_{i > i_0} 2^{- (0 + \dots + (i-1-i_0))}\max\left\{ \exp\left( \ln(n) -c_1 \cdot \frac{\gamma}{2^{i - i_0 + 1} \cdot \gamma_a}\right) , 1 \right\} \nonumber \\
				&\leq \sum_{k \geq 0} 2^{-\Omega(k^2)} \cdot \max\left\{ \exp\left( \ln(n) - c_1 \cdot \frac{\gamma}{2^{k+2} \cdot \gamma_a}\right), 1 \right\}. \label{eq:e-2}
\end{align}
With the above two upper bounds in place, we upper bound (\ref{eq:exp-expression}) by dividing the integral into the two contributing summands, from $\bE_1(\gamma)$ and $\bE_2(\gamma)$, and then upper bounding each individually. Namely, we have
\begin{align*}
\int_{\gamma:c \gamma_a}^{\infty} \Prx\left[ \bD_a \geq \gamma\right]d\gamma &\leq \int_{\gamma:c\gamma_a}^{\infty} \Prx\left[ \bE_1(\gamma) \right] d\gamma + \int_{\gamma:c\gamma_a}^{\infty} \Prx\left[ \bE_2(\gamma) \right]d\gamma.
\end{align*}
The first summand can be simply upper bounded by using the upper bound from (\ref{eq:e-1}), where we have
\begin{align*}
\int_{\gamma:c\gamma_a}^{\infty} \Prx\left[ \bE_1(\gamma) \right] d\gamma \leq \int_{\gamma:c\gamma_a}^{\infty} n \exp\left( - c_1 \cdot \frac{\gamma}{2 \gamma_a}\right) d\gamma \leq \frac{n \cdot 2\gamma_a}{c_1} \cdot e^{-c_1 c / 2} \leq \gamma_a
\end{align*}
for a large enough $c = \Theta(\log n)$. The second summand is upper bounded by the upper bound in (\ref{eq:e-2}), while being slightly more careful in the computation. In particular, we first commute the summation over $k$ and the integral; then, for each $k \geq 0$, we define
\[ \alpha_k := 2^{k+3} \ln(n) \cdot \gamma_a / c_1, \]
and we break up the integral into the interval $[c \cdot\gamma_a, \alpha_k]$ (if $\alpha_k < c \gamma_a$, the interval is empty), as well as $[\alpha_k, \infty)$:
\begin{align*}
\int_{\gamma:c\gamma_a}^{\infty} \Prx\left[ \bE_2(\gamma) \right] d\gamma &\leq \sum_{k\geq 0} 2^{-\Omega(k^2)} \int_{\gamma:c \gamma_a}^{\infty} \max\left\{ \exp\left( \ln(n) - c_1 \cdot \frac{\gamma}{2^{k+2} \cdot \gamma_a} \right), 1 \right\} d\gamma \\
	&\leq \sum_{k\geq 0} 2^{-\Omega(k^2)} \left( \left(\alpha_k - c \cdot \gamma_a \right)^+ + \int_{\gamma:\alpha_k}^{\infty} \exp\left( -\frac{c_1}{2} \cdot \frac{\gamma}{2^{k+2}\gamma_a}\right)d\gamma\right),
\end{align*}
where in the second inequality, we used the fact that the setting of $\alpha_k$, the additional $\ln(n)$ factor in the exponent can be removed up to a factor of two. Thus,
\begin{align*}
\int_{\gamma:c\gamma_a}^{\infty} \Prx\left[ \bE_2(\gamma) \right]d\gamma \leq \sum_{k\geq 0} 2^{-\Omega(k^2)} \left( \gamma_a \cdot \bigO{2^k\log n} + \gamma_a \cdot \bigO{2^{k}}\right) = \bigO{\log n} \cdot \gamma_a.
\end{align*}

Finally, the running time is dominated by the cost of evaluating $\bigO{\log (n/\eps)}$ functions on $n$ points in dimension $d$. Since each evaluation takes $\bigO{d}$ time, the bound follows.
\end{proof}

\subsection{Locality-Sensitive Hashing at Every Scale}
\label{ss:lsh}

\begin{lemma}[Constructing a LSH at every scale]\label{lem:ell-1-hash}
For any $r \geq 0$ and any $d \in \N$, there exists a hash family $\calH(r)$ such that for any two points $x, y \in \R^d$, 
\begin{align*}
\Prx_{\bh \sim \calH(r)}\left[ \bh(x) \neq \bh(y)\right] &\leq \frac{\|x-y\|_1}{r} \\
\Prx_{\bh \sim \calH(r)}\left[ \bh(x) = \bh(y) \right] &\leq \exp\left( - \frac{\|x- y\|_1}{r}\right).
\end{align*}
In addition, for any $\bh \sim \calH(r)$, $\bh(x)$ may be computed in $\bigO{d}$ time.
\end{lemma}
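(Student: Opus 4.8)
The goal is to construct, for every scale $r>0$, a hash family $\calH(r)$ on $\R^d$ that separates points in proportion to their $\ell_1$ distance and, in addition, has an exponentially decaying collision probability for far points. The plan is to build $\calH(r)$ coordinate-by-coordinate using the standard randomly-shifted grid construction, and then carefully account for the joint behavior across all $d$ coordinates.

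\textbf{Step 1: The one-dimensional building block.} For a single coordinate, draw a uniformly random offset $\bt \in [0, r)$ and define $g_\bt(z) = \lfloor (z - \bt)/r \rfloor$, i.e., partition $\R$ into intervals of length $r$ with a random shift. For two reals $u, v$ with $|u-v| \le r$, the probability that they fall in different intervals is exactly $|u-v|/r$; if $|u-v| > r$ they are always separated. So for one coordinate, $\Prx[\bg(u) \ne \bg(v)] = \min\{|u-v|/r, 1\}$.

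\textbf{Step 2: Combining coordinates.} Sample independent offsets $\bt_1,\dots,\bt_d$ and set $\bh(x) = (g_{\bt_1}(x_1), \dots, g_{\bt_d}(x_d))$. Then $\bh(x) = \bh(y)$ iff every coordinate collides, so by independence $\Prx[\bh(x) = \bh(y)] = \prod_{j=1}^d \left(1 - \min\{|x_j - y_j|/r, 1\}\right)$. For the separation bound, a union bound over coordinates gives $\Prx[\bh(x) \ne \bh(y)] \le \sum_{j=1}^d \min\{|x_j-y_j|/r,1\} \le \|x-y\|_1/r$, which is the first desired inequality. For the second inequality, we need $\prod_{j}(1 - \min\{|x_j-y_j|/r, 1\}) \le \exp(-\|x-y\|_1/r)$. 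Using $1 - t \le e^{-t}$ for $t \in [0,1]$ on each factor with $t = \min\{|x_j-y_j|/r,1\}$ gives the product bounded by $\exp\left(-\sum_j \min\{|x_j-y_j|/r,1\}\right)$, and the exponent is at most $-\|x-y\|_1/r$ only when no coordinate is "clipped". The clean fix is to observe that when some coordinate has $|x_j - y_j| \ge r$, that factor is exactly $0$, so the collision probability is $0 \le \exp(-\|x-y\|_1/r)$ trivially; otherwise no clipping occurs and $\sum_j \min\{|x_j-y_j|/r,1\} = \|x-y\|_1/r$ exactly. Either way the bound holds with constant $c_1 = 1$. Finally, $\bh(x)$ is computed with $d$ floor operations, i.e., $\bigO{d}$ time.

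\textbf{Main obstacle.} The only subtlety — and it is minor — is reconciling the two requirements: the union bound wants $\min\{\cdot, 1\}$ truncation in each coordinate (so a single far coordinate can't make the separation probability exceed $1$), while the exponential-decay bound superficially seems to lose the truncation. The case analysis above (a clipped coordinate forces zero collision probability) resolves this cleanly, so there is no real difficulty; the proof is essentially the routine calculation sketched here. One should also remark that the universe $U$ here is $\Z^d$, which is fine for the hashing-into-buckets implementation in $\CrudeNN$ since only non-empty buckets are enumerated.
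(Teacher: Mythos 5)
Your proof is correct and uses essentially the same construction as the paper: a randomly shifted grid of side length $r$ applied independently in each coordinate, with the separation bound from a union bound and the collision bound from $\prod_k(1-t_k)\le\exp(-\sum_k t_k)$. The only cosmetic difference is that your case analysis on ``clipped'' coordinates is unnecessary, since $\max\{0,1-t\}\le e^{-t}$ holds for all $t\ge 0$, which is how the paper dispatches the collision bound in one line.
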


\begin{proof}
The construction proceeds in the following way: in order to generate a function $\bh \colon \R^d \to \Z^d$ sampled from $\calH(r)$,
\begin{itemize}
\item We sample a random vector $\bz \sim [0, r]^d$.
\item We let
\[ \bh(x) = \left( \left\lceil \frac{x_1 + \bz_1}{r} \right\rceil, \left\lceil \frac{x_2 + \bz_2}{r} \right\rceil, \dots , \left\lceil \frac{x_d + \bz_d}{r} \right\rceil \right). \]
\end{itemize}
Fix $x, y \in \R^d$. If $\bh(x) \neq \bh(y)$, there exists some coordinate $k \in [d]$ on which $\bh(x)_k \neq \bh(y)_k$. This occurs whenever (i) $|x_k - y_k| > r$, or (ii) $|x_k - y_k|\leq r$, but $\bz_k$ happens to fall within an interval of length $|x_k - y_k|$, thereby separating $x$ from $y$. By a union bound,
\begin{align*}
\Prx_{\bh \sim \calH(r)}\left[ \bh(x) \neq \bh(y) \right] &\leq \sum_{k=1}^d \frac{|x_k - y_k|}{r} = \frac{\|x-y\|_1}{r}.  
\end{align*}
On the other hand, in order for $\bh(x) = \bh(y)$, it must be the case that every $|x_k - y_k| \leq r$, and in addition, the threshold $\bz_k$ always avoids an interval of length $|x_k - y_k|$. The probability that this occurs is
\begin{align*}
\Prx_{\bh \sim \calH(r)}\left[ \bh(x) = \bh(y)\right] &= \prod_{k=1}^d \max\left\{ 0, 1 - \frac{|x_k - y_k|}{r} \right\} \leq \exp\left( - \sum_{k=1}^d \frac{|x_k-y_k|}{r}\right) \\
	&\leq \exp\left(-\frac{\|x-y\|_1}{r}\right).
\end{align*}
\end{proof}

Extending the above construction to $\ell_2$ follows from embedding the points $A \cup B$ into $\ell_1$ via a standard construction. 

\begin{theorem}[\cite{matousek2013lectures}]\label{thm:l2embed}
Let $\eps \in (0,1)$ and define $\bT: \R^d \rightarrow \R^k$ by 
\[\bT(x)_i = \frac{1}{\beta k} \sum_{j=1}^d Z_{ij}x_j, \quad i= 1, \ldots, k \]
where $\beta = \sqrt{2/\pi}$. Then for every vector $x \in \R^d$, we have
\[\Pr[(1-\eps)\|x\|_2 \le \|\bT(x)\|_1 \le (1+\eps) \|x\|_2] \ge 1-e^{c \eps^2 k}, \]
where $c > 0$ is a constant.
\end{theorem}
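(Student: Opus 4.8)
The plan is to treat $x \in \R^d$ as fixed and reduce $\|\bT(x)\|_1$ to a normalized sum of $k$ i.i.d.\ half‑normal random variables, after which the statement becomes a routine application of a sub‑Gaussian concentration inequality. (The probability bound as written should read $1 - 2e^{-c\eps^2 k}$, i.e.\ with a negative exponent, and if one wants a single exponential the constant must be adjusted; I will prove the natural two‑sided version.)

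\textbf{Reduction to i.i.d.\ half‑normals.} Because the $Z_{ij}$ are i.i.d.\ standard Gaussians, for each fixed $x$ the row sum $W_i := \sum_{j=1}^d Z_{ij} x_j$ is distributed as $\mathcal{N}(0, \|x\|_2^2)$, and $W_1, \dots, W_k$ are mutually independent since distinct rows involve disjoint sets of the entries $Z_{ij}$. Hence, in distribution, $W_i = \|x\|_2\, g_i$ with $g_1, \dots, g_k$ i.i.d.\ standard Gaussian, and
\[ \|\bT(x)\|_1 \;=\; \frac{1}{\beta k}\sum_{i=1}^k |W_i| \;=\; \frac{\|x\|_2}{\beta k}\sum_{i=1}^k |g_i|. \]
Since $\mathbb{E}|g_i| = \sqrt{2/\pi} = \beta$, this estimator is unbiased: $\mathbb{E}\|\bT(x)\|_1 = \|x\|_2$. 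It therefore remains to show that $\frac1k\sum_i |g_i|$ concentrates around $\beta$ within a multiplicative $(1\pm\eps)$ factor.

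\textbf{Concentration.} Set $Y_i := |g_i| - \beta$; these are i.i.d.\ and centered. Each $|g_i|$ is non‑negative with Gaussian upper tail $\Pr[|g_i| > t] \le 2e^{-t^2/2}$, so by the standard equivalence between tail and moment‑generating‑function bounds, $Y_i$ is sub‑Gaussian with an absolute‑constant variance proxy $\sigma^2$ (concretely one bounds $\mathbb{E}[e^{\lambda |g_i|}] = 2e^{\lambda^2/2}\Phi(\lambda)$ and notes the linear term in $\lambda$ cancels against the $e^{-\lambda\beta}$ from centering). Hoeffding's inequality for sums of i.i.d.\ centered sub‑Gaussians then gives, for all $s>0$,
\[ \Pr\!\left[\, \Big| \sum_{i=1}^k Y_i \Big| > s \,\right] \;\le\; 2\exp\!\left( -\frac{s^2}{2\sigma^2 k} \right). \]
Choosing $s = \eps\beta k$ yields $\Pr[\,(1-\eps)\beta k \le \sum_i |g_i| \le (1+\eps)\beta k\,] \ge 1 - 2\exp(-c\eps^2 k)$ with $c = \beta^2/(2\sigma^2)$; multiplying the event through by $\|x\|_2/(\beta k)$ and using the reduction above produces exactly $(1-\eps)\|x\|_2 \le \|\bT(x)\|_1 \le (1+\eps)\|x\|_2$ on that event, which is the claim.

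There is no substantive obstacle here — the whole argument is a textbook Chernoff/Hoeffding computation. The only places that need a little care are: (i) checking that $|g_i| - \beta$ is genuinely sub‑Gaussian (a mean‑zero variable, bounded on one side, with a Gaussian tail on the other), so that the exponent is quadratic in the deviation and hence $\Theta(\eps^2 k)$ rather than $\Theta(\eps k)$ for $\eps\in(0,1)$; and (ii) tracking the normalizing constant $\beta = \sqrt{2/\pi}$ so the final guarantee is exactly multiplicative $(1\pm\eps)$. When this single‑vector statement is later used as an $\ell_2$‑into‑$\ell_1$ embedding, one applies it to each of the $\binom{|A\cup B|}{2}$ pairwise difference vectors and union‑bounds, which forces $k = \Theta(\log(n)/\eps^2)$.
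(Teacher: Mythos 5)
Your proof is correct, and since the paper cites this statement from \cite{matousek2013lectures} without reproducing a proof, your argument is essentially the standard one from that reference: reduce $\|\bT(x)\|_1$ to $\frac{\|x\|_2}{\beta k}\sum_{i=1}^k |g_i|$ for i.i.d.\ standard Gaussians $g_i$ (using 2-stability of the Gaussian row sums), observe $\E|g_i|=\beta$, and apply sub-Gaussian concentration of $|g_i|-\beta$ to get a failure probability of $2e^{-c\eps^2 k}$. You are also right that the exponent in the stated bound should be negative (as written, $1-e^{c\eps^2 k}$ is vacuous), and the factor of $2$ is absorbed into the constant $c$; your assumption that the $Z_{ij}$ are i.i.d.\ standard normals is the intended (and necessary) reading given $\beta=\sqrt{2/\pi}=\E|N(0,1)|$.
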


The map $\bT \colon \R^d \to \R^k$ with $k = \bigO{\log n/\eps^2}$ gives an embedding of $A \cup B$ into $\ell_1^k$ of distortion $(1\pm \eps)$ with high probability. Formally, with probability at least $1 - 1/n$ over the draw of $\bT$ with $t = \bigO{\log n/\eps^2}$, every $a \in A$ and $b \in B$ satisfies
\begin{align*}
(1-\eps) \|a - b\|_2 \leq \| \bT(a) - \bT(b) \|_1 \leq (1+\eps)\| a - b\|_2. 
\end{align*}
This embedding has the effect of reducing $\ell_2$ to $\ell_1$ without affecting the Chamfer distance of the mapped points by more than a $(1\pm \eps)$-factor. In addition, the embedding incurs an extra additive factor of $\bigO{nd \log n/\eps^2}$ to the running time in order to perform the embedding for all points.

\subsection{Reduction to $\poly(n/\eps)$ Aspect Ratio for $\ell_p$, $p \in [1,2]$}
\label{subsec:reduction-to-bounded-aspect-ratio}

In this section, we discuss how to reduce to the case of a $\poly(n/\eps)$ aspect ratio. The reduction proceeds by first obtaining a very crude estimate of $\CH(A, B)$ (which will be a $\poly(n)$-approximation), applying a locality-sensitive hash function in order to partition points of $A$ and $B$ which are significantly farther than $\poly(n) \cdot \CH(A, B)$. Finally, we add $\bigO{\log n}$ coordinates and add random vector of length $\poly(\eps/n) \cdot \CH(A, B)$ in order to guarantee that the minimum distance is at least $\poly(\eps /n) \cdot \CH(A, B)$ without changing $\CH(A,B)$ significantly.

\begin{proof}[Proof of Lemma~\ref{lem:polynomial-aspect-ratio-suffices}]

\textbf{Partitioning Given Very Crude Estimates.} In particular, suppose that with an $\bigO{nd} + \bigO{n\log n}$ time computation, we can achieve a value of $\boldeta \in \R_{\geq 0}$ which satisfies 
\[ \CH(A, B) \leq \boldeta \leq c\cdot \CH(A, B), \]
with high probability (which we will show how to do briefly with $c = \poly(n)$). Then, consider sampling $\bh \sim \calH(c n \cdot \boldeta)$ and partitioning $A$ and $B$ into equivalence classes according to where they hash to under $\bh$. The probability that two points at distance farther than $\bigO{\log n} \cdot cn\cdot \boldeta$ collide under $\bh$ is small enough to union bound over at most $n^2$ many possible pairs of vectors. In addition, the probability that there exists $a \in A$ for which $b \in B$ minimizing $\|a - b\|_p$ satisfies $\bh(a) \neq \bh(b)$ is at most $\CH(A, B) / (cn \cdot \boldeta) \leq 1/n$. This latter inequality implies that computing the Chamfer distance of the corresponding parts in the partition and summing them is equivalent to computing $\CH(A, B)$. 

\textbf{Getting Very Crude Estimates.} We now show how to obtain a $\poly(n)$-approximation to $\CH(A, B)$ in time $\bigO{nd} + \bigO{n \log n}$ for points in $\R^d$ with $\ell_p$ distance. This is done via the $p$-stable sketch of Indyk~\cite{indyk2006stable}. In particular, we sample a vector $\bg \in \R^d$ by independent $p$-stable random variables (for instance, $\bg$ is a standard Gaussian vector for $p = 2$ and a vector of independent Cauchy random variables for $p = 1$). We may then compute the scalar random variables $\{ \langle a, \bg \rangle \}_{a \in A}$ and $\{ \langle b, \bg \rangle \}_{b \in B}$, which give a projection onto a one-dimensional space. By $p$-stability, for any $a \in A$ and $b \in B$, the distribution of $\langle a, \bg \rangle - \langle b, \bg \rangle$ is exactly as $\| a - b\|_p \cdot \bg'$, where $\bg'$ is an independent $p$-stable random variable. Hence, we will have that for every $a \in A$ and $b \in B$, 
\[ \frac{\| a - b\|_p}{\poly(n)} \leq \left| \langle a , \bg \rangle - \langle b, \bg \rangle \right| \leq \|a - b\|_p \cdot \poly(n), \]
with probability $1-1/\poly(n)$ and hence $\CH(\{ \langle a, \bg \rangle \}_{a \in A}, \{ \langle b , \bg \rangle \}_{b \in B})$, which is computable by 1-dimensional nearest neighbor search (i.e., repeatedly querying a binary search tree), gives a $\poly(n)$-approximation to $\CH(A,B)$.

\textbf{Adding Distance} Finally, we now note that $\boldeta / c$ gives us a lower bound on $\CH(A, B)$. Suppose we append $\bigO{\log n}$ coordinates to each point and in those coordinates, we add a random vector of norm $\eps \cdot \boldeta / (cn)$. With high probability, every pair of points is now at distance at least $\eps \cdot \boldeta / (cn)$. In addition, the Chamfer distance between the new set of points increases by at most an additive $\bigO{\eps \boldeta / c}$, which is at most $\bigO{\eps} \cdot \CH(A, B)$, proving Lemma \ref{lem:polynomial-aspect-ratio-suffices}.


\end{proof}

\section{Deferred Analysis from Section~\ref{sec:lowerbound}}\label{app:lower-bound}

\begin{proof}

To set the notation, we let $n_A=|A|$, $n_B=|B|$. 

The proof mimics the argument from~\cite{rohatgi2019conditional}, which proved a similar hardness result for the problem of computing the Earth-Mover Distance. In particular, Lemma 4.3 from that paper shows the following claim.

\begin{claim} 
\label{c:map}
For any two sets $A, B \subseteq \{0,1\}^d$, there is a mapping $f:\{0,1\}^d \to \{0,1\}^{d"}$ and a vector  $v \in \{0,1\}^{d"}$, such that $d"=\bigO{d}$ and for any $a \in A$, $b \in B$:
\begin{itemize}
    \item If $a \cdot b=0$ then $\|f(a)-f(b)\|_1 = 4d+2$,
    \item If $a \cdot b>0$ then $\|f(a)-f(b)\|_1 \ge 4d+4$,
    \item $\|f(a)-v\|_1 = 4d+4$.
\end{itemize}
Furthermore, each evaluation $f(a)$ can be performed in $\bigO{d}$ time.
\end{claim}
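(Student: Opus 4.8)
\textbf{Proof proposal for Claim~\ref{c:map}.} The plan is to construct $f$ explicitly as a padded, ``opposite‑polarity'' $0/1$ encoding. The one conceptual obstacle is this: for $x,y\in\{0,1\}^{d"}$ we have $\|x-y\|_1 = |x|+|y|-2(x\cdot y)$, so in order for $\|f(a)-f(b)\|_1$ to \emph{increase} with $a\cdot b$ we need the overlap $f(a)\cdot f(b)$ to \emph{decrease} as $a$ and $b$ share more $1$'s --- the reverse of ordinary Hamming behaviour. I would handle this by encoding $a$ and $b$ with complementary bits on a common block of coordinates (so a shared $1$ destroys overlap there), and by placing all weight‑balancing padding for $f(A)$ and for $f(B)$ into two \emph{disjoint} blocks, each identically zero on the other family, so padding never contributes to a cross‑overlap $f(a)\cdot f(b)$.

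First I would reduce to the case where $A$ and $B$ are disjoint and membership in $A$ versus $B$ can be read off from a constant number of coordinates: append a two‑bit flag, $10$ to every $a\in A$ and $01$ to every $b\in B$. This preserves every cross inner product $a\cdot b$ ($a\in A$, $b\in B$), changes $d$ (and the constants below) only by an additive constant that is immaterial for the application, and lets $f$ apply a different rule to each family while remaining evaluable in $\bigO{d}$ time. Then, writing $\bar x$ for the bitwise complement of $x$, $1^{m}$ and $0^{m}$ for the all‑ones and all‑zeros strings of length $m$, and setting $d"=6d+2$, define
\[
 f(a) = \big(\, a,\ \bar a,\ 0^{2d+1},\ 1^{2d+1}\,\big)\ \ (a\in A),\qquad
 f(b) = \big(\, \bar b,\ 1^{d},\ \sigma(b),\ 0^{2d+1}\,\big)\ \ (b\in B),
\]
where $\sigma(b)\in\{0,1\}^{2d+1}$ is the fixed‑position string $1^{\,d+1+|b|}\,0^{\,d-|b|}$ (so $|\sigma(b)|=d+1+|b|$, a legal length since $0\le|b|\le d$), and let $v\in\{0,1\}^{d"}$ be an arbitrary fixed string of weight $d+3$ whose support lies inside the third block (legal once $d\ge 2$; the case $d=\bigO{1}$ is trivial).

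The verification is then short bookkeeping. One checks $|f(a)| = |a| + (d-|a|) + (2d+1) = 3d+1$ and $|f(b)| = (d-|b|) + d + (d+1+|b|) = 3d+1$, so both families have constant weight $3d+1$. Block by block, $a\cdot\bar b = |a| - a\cdot b$, $\bar a\cdot 1^{d} = d-|a|$, and the last two blocks contribute $0$, whence $f(a)\cdot f(b) = d - a\cdot b$ and therefore
\[
 \|f(a)-f(b)\|_1 = 2(3d+1) - 2(d - a\cdot b) = 4d+2+2(a\cdot b),
\]
which is exactly $4d+2$ when $a\cdot b=0$ and at least $4d+4$ whenever $a\cdot b\ge 1$. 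Since $f(a)$ vanishes on the third block, $f(a)\cdot v = 0$, giving $\|f(a)-v\|_1 = |f(a)| + |v| = (3d+1)+(d+3) = 4d+4$. Every string in the definition of $f$ occupies a fixed set of coordinates, so computing $f(a)$ reduces to computing $|a|$ and writing down $\bigO{d}$ bits, i.e. $\bigO{d}$ time, and $d" = 6d+2 = \bigO{d}$.

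I expect the only genuine difficulty is the design step flagged at the outset --- finding an encoding whose $\ell_1$ distance is monotone \emph{increasing} in the inner product while all image weights stay balanced and the padding cannot leak into cross‑overlaps. Once that template is fixed, tuning the block lengths to land exactly on the constants $4d+2$ and $4d+4$ (and checking that $\sigma(b)$ and $v$ fit inside their blocks) is purely mechanical.
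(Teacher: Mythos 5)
Your construction is correct; I checked the arithmetic. With $f(a)=(a,\bar a,0^{2d+1},1^{2d+1})$ and $f(b)=(\bar b,1^{d},\sigma(b),0^{2d+1})$ both images have weight $3d+1$, the cross inner product is $f(a)\cdot f(b)=(|a|-a\cdot b)+(d-|a|)=d-a\cdot b$, and the identity $\|x-y\|_1=|x|+|y|-2\,x\cdot y$ for binary vectors gives $\|f(a)-f(b)\|_1=4d+2+2(a\cdot b)$, which yields exactly the two cases claimed; placing $v$ of weight $d+3$ in the block where every $f(a)$ vanishes gives $\|f(a)-v\|_1=4d+4$. Note, though, that the paper does not prove this claim at all: it imports it verbatim as Lemma 4.3 of the cited Rohatgi paper on EMD hardness, so your contribution is a self-contained gadget in the same spirit as the standard orthogonal-vectors-to-Hamming reductions (complementary encoding so that shared $1$'s destroy overlap, plus constant-weight padding in blocks that cannot leak into cross inner products). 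Two small points. First, as stated the claim has a single map $f$ on $\{0,1\}^d$, while you really define two maps (one per family); your two-bit flag repairs this but then the target dimension is $d+2$ and the distances come out as $4(d+2)+2+2(a\cdot b)$ rather than literally $4d+2$ and $4d+4$ --- harmless for the reduction in Section~\ref{sec:lowerbound}, since only the gap of $2$ and the value at $a\cdot b=0$ matter, but worth stating explicitly if you want the constants of the claim verbatim (or simply present the claim with two maps $f_A,f_B$, which is how it is used). Second, the requirement $d+3\le 2d+1$ for placing $v$ needs $d\ge 2$; since the application takes $d=\Theta(\log^2 n)$ this is immaterial, as you say.
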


We will be running $ALG$ on sets $A"=\{f(a): a \in A\}$ and $B"=\{f(b): b \in B\} \cup \{v\}$. It can be seen that, given a reported mapping $g$, we can assume that for all $a" \in A"$ we have $\|a"-g(a")\|_1 \le 4d+4$, as otherwise $g$ can map $a''$ to $v$. 
If for all $a \in A$ there exists $b \in B$ such that $a \cdot b=0$, i.e., $A$ does not contain a hitting vector, then the optimal mapping cost is $n_A(4d+2)$. More generally, let $H$ be the set of  vectors $a \in A$ hitting $B$, and let $h=|H|$. It can be seen that 
\[\CH(A",B") = h (4d+4) + (n_A -h)(4d+2) = n_A(4d+2) + 2h.\]  

Thus, if we could compute $\CH(A",B")$ exactly, we would determine if $h=0$ and solve HS. In what follows we show that even an approximate solution can be used to accomplish this task as long as $\eps$ is small enough.

Let $t=c \log (n)/\eps$ for some large enough constant $c>1$. Consider the  algorithm $HittingSet(A, B)$ that solves HS by invoking the algorithm $ALG$.

\begin{figure}[h!]
	\begin{framed}
		
		\begin{flushleft}
            \noindent Subroutine $HittingSet(A, B)$
            
			\noindent {\bf Input:} Two sets $A, B \subset \{0,1\}^d$ of size at most $n$, and an oracle access to $ALG$ that computes $(1+\eps)$-approximate $\CH$.
			
			\noindent {\bf Output:} Determines whether there exists $a \in A$ such that $a \cdot b>0$ for all $b \in B$
			
			\begin{enumerate}
                \item Sample (uniformly, without replacement) $\min(t,|A|)$ distinct vectors $a \in A$, and for each of them check if $a \cdot b>0$ for all $B$. If such an $a$ is found, return YES. 
                \item Construct $A", B"$ as in Claim~\ref{c:map}, and invoke $ALG$. Let $g: A" \to B"$ be the returned map. 
                \item Identify the set $M$ containing all $a \in A$ such that $\|g(f(a)) - f(a)\|_1=4d+2$.  Note that $a \cdot b=0$ for $b \in B$ such that $f(b)=g(f(a))$.
                \item Recursively execute $HittingSet(A-M, B)$
			\end{enumerate}
		\end{flushleft}
	\end{framed}
	\caption{Reduction from Hitting Set to $(1+\eps)$-approximate $\CH$, implemented using algorithm $ALG$ .}\label{fig:red}
\end{figure}

It can be seen that the first three steps of the algorithm take at most $\bigO{ntd}$ time. Furthermore, if the algorithm terminates, it reports the correct answer, as only vectors $a$ that are guaranteed not to be hitting are removed in the recursion. It remains to bound the total number and cost of the recursive steps. To this end, we will show that, with high probability, in each recursive call we have $|A-M| \le |A|/2$. This will yield a total time of 
$\log n [(n t d) + T(n+1, \bigO{d}, \eps)]$. 
Since $t=c \log (n)/\eps$, $d = \log^2 n$ and $\eps = \frac{\Theta(1)}{d}$, it follows that the time is at most
$n  \log^5 (n) + \log (n)  T(n+1, \bigO{d}, \eps)$, and the theorem follows.

To show that $|A-M| \le |A|/2$, first observe that if the algorithm reaches step (2), then for a large enough constant $c>1$ it holds,  with high probability, that the set $H$ of hitting vectors $a$ has cardinality  at most $\eps \cdot n_A$, as otherwise one such vector would have been sampled. Thus, the subroutine $ALG$ returns a map where the vast majority of the points $f(a)$ have been matched to a point $f(b)$ such that $\|f(a)-f(b)\|_1=4d+2$. More formally, the cost of the mapping $g$ is
\begin{eqnarray*}
    C& = &  \sum_{a" \in A"} \|a" -g(a")\|_1\\
    & \le & (1+\eps) [ n_A(4d+2) + 2 |H|] \\
    & \le &  (1+\eps) [ n_A(4d+2) + 2 \eps n_A]\\
    &\le &  n_A(4d+2) + 4 \eps n_A (d+2)\\
    &\le & n_A(4d+2) + n_A 
\end{eqnarray*}  
where in the last step we used the assumption about $\eps$.

Denote $m=|M|$. Observe that the cost $C$ of the mapping $g$ can be alternatively written as:
\[  C = m(4d+2) + (n_A-m)(4d+4) = n_A (4d+4)-2m \]
This implies $m=(n_A (4d+4) - C)/2$. Since we showed earlier that $C \le  n_A (4d+2)+n_A$, we conclude that
\[ m=(n_A (4d+4) - C)/2 \ge (2 n_A - n_A)/2 = n_A/2 . \]

Thus, $|A-M|=n_A-m \le n_A/2$, completing the proof.

\end{proof}

\end{document}